\documentclass[conference]{IEEEtran}
\IEEEoverridecommandlockouts

\usepackage{xcolor}
\usepackage[bookmarks=false, colorlinks=true, linkcolor=blue, urlcolor=teal]{hyperref}
\usepackage[utf8]{inputenc} 
\usepackage[T1]{fontenc}
\usepackage{ifthen}
\usepackage{cite,dsfont,proba}
\usepackage[cmex10]{amsmath}

\usepackage{amsmath,amssymb,amsthm}
\usepackage{tikz,enumerate}
\usepackage{cite,mathtools}
\usepackage{hyperref}
\usepackage{cleveref}
\usepackage{mathrsfs}
\usepackage{geometry}
\renewcommand{\baselinestretch}{0.98} 

\theoremstyle{plain}

\newtheorem{theorem}{Theorem}
\newtheorem{lemma}{Lemma}

\newtheorem{corollary}{Corollary}
\newtheorem{definition}{Definition}
\newtheorem{example}{Example}

\newtheorem{remark}{Remark}

\theoremstyle{remark}

\theoremstyle{definition}

\usepackage{tikz}

\title{On the Subadditivity of Mutual f-Information\thanks{This work was supported by the Hong Kong Research Grants Council (RGC) under grant number 14310025, and by the 1+1+1 CUHK–CUHK(SZ)–GDST Joint Collaboration Fund.}} 

\author{
\IEEEauthorblockN{Hamidreza Abin,  Amin Gohari}
\IEEEauthorblockA{Dept. of Information Engineering, CUHK\\
                  \textsuperscript{*}Email: \{hamabin, agohari\}@ie.cuhk.edu.hk}
}
\begin{document}

\maketitle


\begin{abstract}
We address an open problem raised by Polyanskiy and Wu in their textbook concerning the subadditivity of mutual \(f\)-information. We study two notions of generalized mutual information: mutual \(f\)-information, defined by \(\widetilde{I}_f(X;Y)=D_f(p_{XY}\|p_Xp_Y)\), and variational \((G,f)\)-information, obtained by minimizing an average of transformed \(f\)-divergences. For the first notion, we derive a sufficient condition on \(g(x)=x^2f''(x)\) that guarantees subadditivity under conditionally independent observations. This condition yields explicit families beyond the previously known examples generated by \(x\log x\) and \((x-1)\log x\). Our proof relies on the concavity of a generalized variance functional. For the variational notion, we extend existing sufficient conditions beyond the transforms \(G(x)\in\{x,\log(1+x),-\log(1-x)\}\). Using a comparison inequality for \(G^{-1}(G(A)+G(B))\), we obtain differential criteria for concave and convex transforms. Explicit constructions illustrate the resulting classes of subadditive information measures.
\end{abstract}

\section{Introduction}

Mutual information is a fundamental tool in information theory, statistics, and machine learning for quantifying dependence between random variables. Its usefulness stems not only from its operational interpretations but also from structural properties such as data processing and subadditivity. A natural approach to generalizing mutual information is to replace the Kullback--Leibler divergence with an $f$-divergence. Introduced by Ali--Silvey and Csisz\'ar~\cite{ali1966general,csiszar1967information}, the $f$-divergence between distributions $p\ll q$ on a finite alphabet is\footnote{We say that \(p_X\) is absolutely continuous with respect to \(q_X\), denoted by \(p_X \ll q_X\), if for every \(x \in \mathcal{X}\),
$q_X(x) = 0 \implies p_X(x) = 0$.}
\[
D_f(p\|q)=\sum_x q(x)f\!\left(\frac{p(x)}{q(x)}\right),
\]
where $f$ is convex and satisfies $f(1)=0$. This framework includes many familiar discrepancy measures and provides several distinct generalizations of Shannon mutual information. In particular, \(  \widetilde{I}_{f}(X;Y)\) reduces to Shannon's mutual information when we choose \(f(x)=x\log x\).

In this paper, we investigate subadditivity for two such generalizations. The first is the mutual $f$-information
\[
\widetilde{I}_f(X;Y)
\triangleq D_f(p_{XY}\|p_Xp_Y).
\]
The second is the variational $(G,f)$-information
\[
I_{G,f}(X;Y)
\triangleq
\min_{q_Y}\sum_x p_X(x)
G\!\left(D_f(p_{Y|X=x}\|q_Y)\right),
\]
where $G$ is non-decreasing on the relevant divergence range and $G(0)=0$. Both quantities reduce to Shannon mutual information for $f(x)=x\log x$ and, in the second definition, $G(x)=x$. In general, however, they are different: $\widetilde{I}_f$ evaluates the divergence relative to the actual product of marginals, whereas $I_{G,f}$ optimizes an auxiliary output distribution. This distinction is important when studying subadditivity.

\subsection*{Subadditivity of mutual $f$-information}

For Shannon mutual information, conditionally independent observations satisfy
\[
I(X;YZ)\leq I(X;Y)+I(X;Z)
\qquad\text{whenever }Y\to X\to Z.
\]
This inequality allows information obtained from multiple observations to be bounded by a sum of individual contributions. 
We call this property ``subadditivity'' because it is equivalent to subadditivity of mutual information over product channels:
\[
I(X_1X_2;Y_1Y_2)\leq I(X_1;Y_1)+I(X_2;Y_2)
\]
for every \(p(x_1,x_2)\,p(y_1|x_1)\,p(y_2|x_2)\).

Its counterpart for mutual $f$-information is
\begin{equation}
\widetilde{I}_f(X;YZ)
\leq
\widetilde{I}_f(X;Y)+\widetilde{I}_f(X;Z),
\label{intro:mutual-f-subadditivity}
\end{equation}
whenever $Y\to X\to Z$. In particular, this property yields subadditivity over product channels, even when their inputs are correlated.

Identifying the functions $f$ for which
\eqref{intro:mutual-f-subadditivity} holds is a nontrivial problem. 
In their textbook~\cite[Section~7.8]{polyanskiy2025information}, Polyanskiy and Wu mention that the functions $f(x)=x\log x$ and $f(x)=(x-1)\log x$ satisfy this property  and pose the existence of further examples as an open question. 

Our first contribution is a sufficient condition that produces additional families of subadditive mutual $f$-information. Writing
\[
g(x)=x^2f''(x),
\]
we show that subadditivity follows if $g''(x)<0$ for all $x>0$ and there exist constants $\eta_g>0$ and $\ell_g\in\mathbb{R}$ such that
\[
g(x)+\frac{2(g'(x)-\eta_g)^2}{-g''(x)}
\leq
2\eta_g+(x-1)\ell_g,
\qquad x>0.
\]
The proof reduces the desired information inequality to concavity, with respect to $q$, of the generalized variance functional
\[
q\longmapsto
\mathbb{E}_{p}\!\left[
g\!\left(\frac{q}{p}\right)
\left(u-\mathbb{E}_{q}[u]\right)^2
\right],
\]
where $p$ is an arbitrary distribution and $u$ is an arbitrary function.
Although the displayed differential condition excludes affine $g$, the underlying concavity argument also recovers the two classical generators.

We further give a construction based on functions of the form $g(x)=\alpha x+s(x)$. This leads to explicit subadditive generators, including
\[
f(x)=x\log x+\frac{2}{3}(\sqrt{x}-1)^2
\]
and
\[
f(x)=2x\log x-(x+1)\log\!\left(\frac{x+1}{2}\right).
\]
Additional examples follow from integration over parametric families and from other choices of $s$. These results provide sufficient conditions and new examples, rather than a complete characterization of all subadditive generators.

\subsection*{Subadditivity of variational $(G,f)$-information}

Our second contribution concerns the variational framework developed in~\cite{abin2026class:isit,abin2026class:arx}. Define the transformed divergence
\[
\mathcal{D}_{G,f}(p\|q)
\triangleq G\!\left(D_f(p\|q)\right).
\]
The cited work establishes that divergence subadditivity over product distributions implies subadditivity of $I_{G,f}$, with the converse holding under mild regularity and non-degeneracy assumptions. It also develops binary-alphabet reductions and sufficient conditions for the transforms
\[
G(x)\in\{x,\log(1+x),-\log(1-x)\}.
\]

Here, we extend the available sufficient conditions for variational information to a broader class of transforms. We consider
\[
G(x)=\int_0^x\frac{1}{u(t)}\,dt,
\]
with positive concave $u$ on an appropriate domain. The central step is a comparison bound for
$G^{-1}(G(A)+G(B))$ in terms of $A+B$ and $AB$. Combining this bound with scaled versions of the logarithmic criteria yields sufficient conditions expressed through
\[
x^2g''(x)-\gamma g(\alpha)g(x/\alpha)\leq 0
\]
or
\[
x^2g''(x)+\gamma g(\alpha)g(x/\alpha)\leq 0,
\]
where $g(x)=x^2f''(x)$ and $\gamma>0$ is chosen according to the transform and the range of $D_f$. 

\subsection*{Related work and organization}

The two notions of information studied here have different subadditivity requirements. Csisz\'ar's variational formulation~\cite{csiszar1972class} and its transformed extension~\cite{abin2026class:isit,abin2026class:arx} motivate our second set of results, while the question in~\cite{polyanskiy2025information} motivates the first. Other tensorization results for mutual $f$-information involve different assumptions and inequality directions. For example, concavity of $1/f''$ yields superadditivity properties under independent inputs~\cite{masiha2023f} and tensorization of the $f$-hypercontractivity ribbon~\cite{beigi2018phi,wang2022estimation22}. Our focus is instead on subadditivity under conditionally independent observations. For additional properties and applications of \(f\)-divergences, we refer the reader to \cite{aa1,aa2,aa3,aa4,aa5,aa6,aa7,aa8,aa9,qq1,qq2,qq3,qq4,qq5,qq6,qq7,qq8,qq9,qq10}.

Section~II develops the sufficient condition for $\widetilde{I}_f$ and presents explicit examples. Section~III reviews the variational $(G,f)$ framework and establishes the extensions to other transforms. Throughout, we restrict attention to finite alphabets. Detailed proofs and calculations are provided in the appendices.

\subsection*{Notation}
Sets and alphabets are denoted by calligraphic letters (e.g., \(\mathcal{X}\)). Random variables are represented by uppercase letters (e.g., \(X, Y\)), while their specific realizations are written in the corresponding lowercase letters (e.g., \(x, y\)). Functions are denoted by either lowercase or uppercase letters (e.g., \(f(x)\) or \(G(x)\)). 

All probability distributions are assumed to be defined over finite alphabets. The marginal distribution of a random variable \(X\) is denoted by a letter with subscript \(X\) (such as \(p_X\) or \(Q_X\)), and joint distributions over product spaces like \(\mathcal{X} \times \mathcal{Y}\) are indexed accordingly (e.g., \(p_{XY}\)). Conditional distributions are indicated with a conditioning subscript, such as \(p_{Y|X}\) or \(q_{Y|X}\). 

A Markov chain relationship is denoted by \(X \to Y \to Z\), signifying that \(X\) and \(Z\) are conditionally independent given \(Y\), which implies the factorization \(P_{XZ|Y} = P_{X|Y}P_{Z|Y}\).
Finally, throughout this paper, the symbol \(\log\) is used to denote the natural logarithm.

\color{black}


\section{ Mutual \(f\)-Information}

\begin{definition}
    Consider a convex function \(f:[0,\infty)\rightarrow \mathbb{R}\cup \{+\infty\}\) with \(f(1)=0\). We say that the function \(f\) is subadditive if, for all random variables \(X\), \(Y\), and \(Z\) defined over finite sets, we have
    \begin{align}
 \widetilde{I}_{f}(X;YZ)&\leq \widetilde{I}_{f}(X;Y) + \widetilde{I}_{f}(X;Z)\label{sub_I_def}
\end{align}
whenever \(Y \rightarrow X \rightarrow Z\) forms a Markov chain, i.e., \(p_{XYZ} = p_X p_{Y|X} p_{Z|X}\).
\end{definition}
For applications of subadditive $f$-information functions, see \cite{abin2026class:isit}. In the next theorem, we introduce a class of subadditive functions.
\begin{theorem}\label{thm:mutual-f-subadditivity}
    Let \(f:[0,\infty)\rightarrow \mathbb{R}\) with \(f(1)=0\) be a continuous convex function on $[0,\infty)$. Assume that $f$ is also four times continuously differentiable on $(0,\infty)$. Let \(g(x)=x^2f''(x)\). If \(g''(x)<0\) for all \(x>0\) and there exist constants \(\eta_g>0\) and \(\ell_g\in\mathbb{R}\) such that
    \begin{align}
       g(x)+\dfrac{2(g'(x)-\eta_g)^2}{-g''(x)}\leq 2\eta_g+(x-1)\ell_g~~\forall x>0,
    \end{align}
    then \(\widetilde{I}_{f}\) is subadditive.
\end{theorem}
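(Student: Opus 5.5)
We plan to reduce the information inequality to the concavity of the generalized variance functional described in the introduction, via an interpolation argument along a path of distributions. Fix $p_X$, $p_{Y|X}$, $p_{Z|X}$ and write $D(\cdot)$ for the divergence gap
\[
\Delta \;=\; \widetilde{I}_f(X;Y)+\widetilde{I}_f(X;Z)-\widetilde{I}_f(X;YZ).
\]
Since every term is a sum over $(y,z)$ or over $y$ or $z$ of quantities like $p_Y(y)p_Z(z)\sum_x p_X(x) f(\cdot)$, we parametrize the dependence. One natural path is to replace $p_{Z|X}$ by $p_{Z|X}^{(t)}=(1-t)p_Z+t\,p_{Z|X}$, so that $\Delta(0)=0$ (when $Z$ is independent of $X$, both sides coincide). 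We then aim to show $\Delta(t)\ge 0$ for $t\in[0,1]$ by proving $\Delta(0)=\Delta'(0)=0$ and convexity-type control, or more robustly by a first-order argument: compute $\frac{d}{dt}\Delta(t)$ and show it is nonnegative.

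The key steps, in order, are as follows.

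\emph{Step 1.} Differentiate in $t$. Terms involving $\widetilde{I}_f(X;Y)$ are constant. Conditioning on $z$, the derivative of $\widetilde I_f(X;YZ)-\widetilde I_f(X;Z)$ is expressed as an expectation over $z$ of first-derivative quantities $f'$ evaluated at likelihood ratios $r_{xy}=p_{Y|X}(y|x)/p_Y(y)$ reweighted by $q=p_{X|Z=z}^{(t)}$.

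\emph{Step 2.} A second differentiation, or rewriting the first derivative as an integral of the second derivative along the segment from $p_X$ to $q$, produces quadratic forms of the type
\[
\mathbb{E}_p\!\left[g\!\left(\tfrac{q}{p}\right)\big(u-\mathbb{E}_q u\big)^2\right],
\]
using $x^2f''(x)=g(x)$ to absorb the Jacobian factors. Here $p=p_X$ (restricted to a fixed $y$), $q$ is the tilted distribution, and $u$ encodes the direction $p_{Y|X}(y|\cdot)$. The subadditivity then becomes a Jensen-type statement: averaging the functional over $z$ with $\sum_z p_Z(z) q_z=p$ is at most its value at $p$. This is precisely \emph{concavity in $q$} of
\[
V(q)=\mathbb{E}_{p}\!\left[g\!\left(\tfrac{q}{p}\right)\left(u-\mathbb{E}_{q}[u]\right)^2\right].
\]

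\emph{Step 3.} Prove concavity of $V$ by computing its second directional derivative along $q+s h$ with $\sum h=0$. Writing $w=q/p$, $\phi=h/p$, and $m=\mathbb{E}_q u$, with $m'=\mathbb{E}_p[\phi u]$, we get
\[
\frac{d^2}{ds^2}V=\mathbb{E}_p\!\left[g''(w)\phi^2(u-m)^2-4g'(w)\phi(u-m)m'+2g(w)m'^2\right]-4m'\,\mathbb{E}_p[g(w)(u-m)]\cdot 0\text{-type terms}.
\]
For the remaining cross term we use the constraints $\mathbb{E}_p[\phi]=0$ and $\mathbb{E}_p[w(u-m)]=0$. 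These allow adding $\eta_g$ and $\ell_g$ multiples for free: $g'(w)\to g'(w)-\eta_g$, and $2g(w)\to 2g(w)-(w-1)\ell_g\cdot(\dots)$. Completing the square pointwise in $\phi(u-m)$ with the negative coefficient $g''(w)$ leaves
\[
m'^2\,\mathbb{E}_p\!\left[g(w)+\tfrac{2(g'(w)-\eta_g)^2}{-g''(w)}-2\eta_g-(w-1)\ell_g\right],
\]
up to normalizing constants. This is $\le 0$ by the hypothesis. The $2\eta_g$ term arises from $m'^2=\big(\mathbb{E}_p[\phi(u-m)]\big)^2$ paired with the $\eta_g$ shift.

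\emph{Main obstacle.} We expect Step 3's bookkeeping to be the crux: choosing exactly which zero-mean identities to insert so that the coefficients match $2\eta_g$ and $(x-1)\ell_g$ and the square completes. We also expect Step 2 to need care: making the reduction from $\Delta$ to the concavity of $V$ rigorous. We would handle this via the integral representation $f(b)-f(a)-f'(a)(b-a)=\int (b-s)f''(s)\,ds$, and handle boundary zeros by continuity of $f$ on $[0,\infty)$.
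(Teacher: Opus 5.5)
Your Step 3 is the paper's argument for concavity of the generalized variance functional. Your second derivative is the paper's expression with $\widetilde\omega=\phi(u-m)$ and $\widetilde\gamma=m'$. Completing the square after $g'\to g'-\eta_g$ leaves $2m'^2\,\mathbb{E}_p\bigl[g(w)+2(g'(w)-\eta_g)^2/(-g''(w))-2\eta_g\bigr]\le 0$. Two small corrections. The $\eta_g$ shift is not free: it produces the $-4\eta_g m'^2$ term, via $\mathbb{E}_p[\phi(u-m)]=m'$. The $\ell_g$ term drops out because $\mathbb{E}_p[w]=1$, not because $\mathbb{E}_p[w(u-m)]=0$.

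The genuine gap is Step 2, where the reduction actually happens; as you describe it, it is wrong. Along your path, $p^{(t)}_{X|Z=z}=(1-t)p_X+t\,p_{X|z}\triangleq r^{(t)}_z$, and
\[
\Delta(t)=\Theta(p_X)-\sum_z p_Z(z)\,\Theta\bigl(r^{(t)}_z\bigr),\qquad
\Theta(r)\triangleq\sum_y r(y)\,D_f\bigl(r_{X|y}\big\|p_X\bigr)-D_f(r\|p_X),
\]
with $r_{XY}=r\,p_{Y|X}$ and $\Theta(p_X)=\widetilde I_f(X;Y)$. So the average over $z$ is a Jensen inequality for $\Theta$, not for $V$. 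Your scheme $\Delta(0)=\Delta'(0)=0$, $\Delta''\ge 0$ amounts to concavity of $\Theta$ along the segments $[p_X,p_{X|z}]$. The functional $V$ enters only through the Hessian of $\Theta$: for $\sum_x v(x)=0$ and $\beta=v/r$,
\[
\frac{d^2}{d\epsilon^2}\Theta(r+\epsilon v)\Big|_{\epsilon=0}=\sum_y r(y)\,V_\beta\bigl(r_{X|y}\bigr)-V_\beta(r),\qquad V_\beta(q)=\mathbb{E}_{p_X}\!\left[g\!\left(\tfrac{q}{p_X}\right)\bigl(\beta-\mathbb{E}_q\beta\bigr)^2\right].
\]
Hence the Jensen step for $V$ has these features:
\begin{itemize}
\item it is over $y$, using $\sum_y r(y)r_{X|y}=r$;
\item it is taken at the moving base point $r=r^{(t)}_z$, with reference measure $p_X$ itself;
\item its $u=\beta$ encodes the interpolation direction $p_{X|z}-p_X$.
\end{itemize}
You wrote instead an average over $z$ of $V(q_z)$ with $\sum_z p_Z(z)q_z=p_X$ and $u$ built from $p_{Y|X}(y|\cdot)$. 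Even after swapping the roles of $Y$ and $Z$, that is the condition at the single base point $r=p_X$. It controls only the second derivative at $t=0$, i.e.\ subadditivity to second order near independence, not $\Delta(1)\ge 0$.

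Your Step 3 gives concavity of $V$ for every $p$ and $u$, so the repair is the paper's two-level argument. Derive the Hessian identity above, which is a perspective-function computation using $x^2f''=g$. Conclude that $\Theta$ is concave. Then apply Jensen over $z$, after reducing to full-support distributions by continuity.
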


\begin{proof}
      The proof can be found in Appendix~\ref{app:mutual-f-subadditivity}.
\end{proof}
\begin{corollary}\label{coro:1}
Let \(s(x)\) be a positive concave function on \(x>0\) such that \(s''(x)<0\) for all \(x>0\). Assume that the function
\[
\mathtt{K}_{s}(x)= s(x)+\frac{2\left(s'(x)\right)^2}{-s''(x)}
\]
is concave. Let \(\alpha\) be a positive number such that
\[
\alpha \geq \mathtt{K}_{s}(1).
\]
Let \(g(x)=\alpha x+s(x)\). Then, by choosing
\[
(\eta_g,\ell_g)=\left(\alpha,\; \alpha+ \frac{d}{dx}\mathtt{K}_{s}(x)\Big|_{x=1}\right),
\]
the function \(g(x)\) satisfies the assumptions of Theorem~\ref{thm:mutual-f-subadditivity}.

Moreover, the corresponding \(f(x)\) is given by
\begin{align}
        f(x)=\alpha x\log x+\int_{1}^{x}(x-t)\frac{s(t)}{t^2}\,dt,\label{int:1:1}
\end{align}
for which  \(\widetilde{I}_{f}\) is subadditive.
\end{corollary}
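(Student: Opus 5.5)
We verify the hypotheses of Theorem~\ref{thm:mutual-f-subadditivity} for $g(x)=\alpha x+s(x)$, and then recover $f$ from $g$ by integrating twice.

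\emph{Concavity and the key identity.} Since $g''=s''<0$ on $(0,\infty)$, strict concavity holds. With $\eta_g=\alpha$ we get $g'(x)-\eta_g=s'(x)$. The left-hand side of the condition in Theorem~\ref{thm:mutual-f-subadditivity} is therefore
\[
\alpha x+s(x)+\frac{2(s'(x))^2}{-s''(x)}=\alpha x+\mathtt{K}_s(x).
\]
The required inequality thus becomes
\[
\alpha x+\mathtt{K}_s(x)\le 2\alpha+(x-1)\ell_g .
\]
Substituting $\ell_g=\alpha+\mathtt{K}_s'(1)$ and cancelling the term $\alpha x$ (note that $2\alpha+(x-1)\alpha=\alpha x+\alpha$), this is equivalent to
\[
\mathtt{K}_s(x)\le \alpha+(x-1)\mathtt{K}_s'(1)\qquad \forall x>0 .
\]

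\emph{Tangent-line step.} Because $\mathtt{K}_s$ is concave, it lies below its tangent line at $x=1$:
\[
\mathtt{K}_s(x)\le \mathtt{K}_s(1)+(x-1)\mathtt{K}_s'(1).
\]
Combining this with the hypothesis $\alpha\ge \mathtt{K}_s(1)$ gives the displayed inequality. The condition $\eta_g=\alpha>0$ holds by assumption.

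The one delicate point is differentiability of $\mathtt{K}_s$ at $1$, which is needed for the tangent line. This follows from the smoothness implicitly assumed on $s$. Indeed, $f$ must be $C^4$, so $g$ must be $C^2$, so $s$ must be $C^2$. For $\mathtt{K}_s'$ to exist we also need $s\in C^3$; we take this as part of the standing regularity. If $\mathtt{K}_s$ fails to be differentiable at $1$, any supergradient of $\mathtt{K}_s$ at $1$ would serve in place of $\mathtt{K}_s'(1)$.

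\emph{Recovering $f$.} Next we check that \eqref{int:1:1} is the right $f$. Define $f$ by \eqref{int:1:1}. Then $f(1)=0$. Differentiating under the integral sign gives
\[
f'(x)=\alpha(\log x+1)+\int_1^x \frac{s(t)}{t^2}\,dt,
\qquad
f''(x)=\frac{\alpha}{x}+\frac{s(x)}{x^2}.
\]
Hence $x^2f''(x)=\alpha x+s(x)=g(x)$, and $f''>0$ because $s>0$, so $f$ is convex.

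Continuity of $f$ at $0$ is the remaining technical check. The term $\alpha x\log x$ tends to $0$. For the integral term, $s$ is positive and concave, so it has a finite limit $s(0^+)\ge 0$. Consequently $\int_x^1 (t-x)\,s(t)/t^2\,dt$ grows at most like $\log(1/x)\cdot x$ plus a bounded quantity, and so it has a finite limit as $x\to 0$. Extending $f$ by continuity at $0$ keeps it convex, so all hypotheses of Theorem~\ref{thm:mutual-f-subadditivity} hold and $\widetilde{I}_f$ is subadditive.
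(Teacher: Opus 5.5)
Your verification of the hypotheses on $g$ is the paper's argument. With $\eta_g=\alpha$ you get $g'-\eta_g=s'$, so the left-hand side becomes $\alpha x+\mathtt{K}_s(x)$. The tangent-line bound for the concave $\mathtt{K}_s$ at $x=1$, together with $\alpha\ge \mathtt{K}_s(1)$, then gives the inequality with $\ell_g=\alpha+\mathtt{K}_s'(1)$. Your computation $x^2f''(x)=\alpha x+s(x)$ is also correct.

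The gap is in your last paragraph, on continuity of $f$ at $0$. The paper's proof never attempts this check, and your estimate there is false. Since $s$ is positive and concave, it is non-decreasing, so $s(t)\ge s(0^+)$ on $(0,1]$. Hence for $0<x<1$,
\[
\int_x^1 (t-x)\frac{s(t)}{t^2}\,dt\;\ge\; s(0^+)\int_x^1 \frac{t-x}{t^2}\,dt \;=\; s(0^+)\bigl(\log(1/x)-1+x\bigr).
\]
This diverges whenever $s(0^+)>0$: the term grows like $s(0^+)\log(1/x)$, not like $x\log(1/x)$. More precisely, $f(0^+)<\infty$ if and only if $\int_0^1 s(t)\,t^{-1}\,dt<\infty$.

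A concrete instance is $s(x)=1+\sqrt{x}$. It satisfies every hypothesis of the corollary: $s''<0$, and $\mathtt{K}_s(x)=1+3\sqrt{x}$ is concave. Yet the formula for $f$ gives $f(x)=\alpha x\log x+(x-1-\log x)+2(\sqrt{x}-1)^2$, with $f(0^+)=+\infty$. For such $s$, the requirement in Theorem~\ref{thm:mutual-f-subadditivity} that $f$ be real-valued and continuous on $[0,\infty)$ fails. So your conclusion that all hypotheses hold is not justified.

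The paper silently shares this gap, but your patch asserts something untrue. There are two ways to repair it. You can add the assumption $\int_0^1 s(t)\,t^{-1}\,dt<\infty$; this covers all the paper's examples $s=bx^{\gamma}$, $bx/(x+1)$ and $b\log(1+x)$, each of which is $O(t^{\gamma})$ near $0$ for some $\gamma>0$. Alternatively, you can give a separate argument that Theorem~\ref{thm:mutual-f-subadditivity} extends to generators with $f(0)=+\infty$; the same issue already arises for $(x-1)\log x$.
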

\begin{proof}
  See Appendix~\ref{coro:1:proof}.
\end{proof}
If the functions $f_1(x)$ and $f_2(x)$ lead to subadditive information measures $\widetilde{I}_{f_1}$ and $\widetilde{I}_{f_2}$, then for any non-negative constants $\alpha, \beta \ge 0$, the function $f(x) = \alpha f_1(x) + \beta f_2(x)$ also leads to a subadditive information measure. Thus, we can mix the functions obtained from Corollary~\ref{coro:1} (as well as $x\log x$ and $(x-1)\log(x)$).

\begin{example}\label{ex;1}
    The following functions are subadditive:
  \begin{enumerate}
      \item $f(x)=x\log(x)+\frac{2\alpha}{3}(\sqrt{x}-1)^{2}$ for $\alpha\in[0,1]$;
      \item $f(x)=\frac{5}{6}x\log(x)+1-\frac{x-1}{\log(x)}$;
      \item $f(x)=2x\log(x)-(x+1)\log\left(\frac{x+1}{2}\right)$;
      \item $f(x)=x\log(x)+\frac{1}{4}\int_{1}^{x}(x-t)\frac{\log(1+t)}{t^2}dt$.
  \end{enumerate}
\end{example}

For details of the calculations in Example~\ref{ex;1}, please see Appendix~\ref{ex;1:p}.

\section{Variational Mutual \(f\)-Information}

In this section, we consider a different variational form of mutual \(f\)-information.
In this part, we aim to extend the results of \cite{abin2026class:isit}, whose full arXiv version is available in \cite{abin2026class:arx}.
\begin{lemma}[\cite{abin2026class:isit}]\label{domain:lem}
For arbitrary probability distributions \(p_X\) and \(q_X\) defined on a set \(\mathcal{X}\), the following inequality holds:
\begin{align}
    D_f(p_X\|q_X) \leq \mathsf{D_m}(f) \triangleq f(0) + \lim_{t\rightarrow\infty}\frac{f(t)}{t}. \label{defDmf}
\end{align}
\end{lemma}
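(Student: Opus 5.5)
The plan is to bound each term of the $f$-divergence pointwise by a convex combination of the values of $f$ at the two endpoints $0$ and $\infty$. Write $D_f(p_X\|q_X)=\sum_{x:q_X(x)>0} q_X(x) f\!\left(\frac{p_X(x)}{q_X(x)}\right)$, with the standard convention that points where $q_X(x)=0<p_X(x)$ contribute $p_X(x)\lim_{t\to\infty}f(t)/t$. Absolute continuity $p\ll q$ removes these points, but keeping them costs nothing. Let $f'(\infty)\triangleq\lim_{t\to\infty}f(t)/t$. This limit exists in $\mathbb{R}\cup\{+\infty\}$ because $t\mapsto (f(t)-f(0))/t$ is non-decreasing for convex $f$, and $f(0)/t\to 0$.

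The key step is the chord-type inequality
\[
f(t)\le f(0)+t\,f'(\infty),\qquad t\ge 0.
\]
To prove it, I use monotonicity of slopes of a convex function. For $s>t>0$,
\[
\frac{f(t)-f(0)}{t}\le \frac{f(s)-f(0)}{s}.
\]
Letting $s\to\infty$, the right-hand side tends to $f'(\infty)$, since $f(0)/s\to0$ whenever $f(0)$ is finite. If $f(0)=+\infty$ or $f'(\infty)=+\infty$, the bound $\mathsf{D_m}(f)=+\infty$ is trivial, so I only need to handle the finite case. At $t=0$ the inequality is an equality.

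Then I sum. Setting $t=p_X(x)/q_X(x)$ and multiplying by $q_X(x)$ gives
\[
q_X(x)f\!\left(\frac{p_X(x)}{q_X(x)}\right)\le q_X(x)f(0)+p_X(x)f'(\infty).
\]
Summing over the support of $q_X$ yields
\[
D_f\le f(0)\sum q_X+f'(\infty)\sum_{q>0}p_X.
\]
The remaining points where $q_X(x)=0$ contribute exactly $p_X(x)f'(\infty)$. Hence the total is at most
\[
f(0)+f'(\infty)\sum_x p_X(x)=f(0)+f'(\infty)=\mathsf{D_m}(f).
\]

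The only delicate point is the sign of $f(0)$ and $f'(\infty)$, together with possibly infinite values. Each summand is bounded individually, and the weights satisfy $\sum q=1$ and $\sum p=1$ exactly, so no sign issue arises. The main care needed is justifying existence of the limit $f'(\infty)$ and treating infinite cases separately.
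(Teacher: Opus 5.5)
Your proof is correct and complete. The paper gives no proof of this lemma; it imports it from the authors' earlier work, so there is no in-paper argument to compare against. Your route is the standard proof of this bound:
\begin{itemize}
\item the chord bound $f(t)\le f(0)+t\,f'(\infty)$, obtained from monotonicity of the slopes $(f(t)-f(0))/t$;
\item weighting by $q_X(x)$ and summing;
\item the convention of a contribution $p_X(x)\,f'(\infty)$ at points where $q_X(x)=0<p_X(x)$.
\end{itemize}
Your handling of infinite values is adequate, and in the paper's Section~III setting $f$ is real-valued, so $f(0)$ is finite anyway.
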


\begin{definition}[\cite{abin2026class:isit}]\label{def:admissible}
A pair of functions \((G(x),f(x))\) is said to be admissible if \(f:[0,\infty)\rightarrow\mathbb{R}\) is convex and satisfies \(f(1)=0\), while \(G:[0,\mathsf{D_m}(f))\rightarrow [0,\infty)\) is non-decreasing and satisfies \(G(0)=0\). Here, \(\mathsf{D_m}(f)\) is the quantity introduced in \eqref{defDmf}.
\end{definition}

\begin{definition}[\cite{abin2026class:isit}]\label{def:div:mutual}
Let \((G,f)\) be an admissible pair. For distributions \(p_X\ll q_X\) on a finite alphabet \(\mathcal{X}\), the \((G,f)\)-divergence \(\mathcal{D}_{G,f}\) is defined as
\begin{align}
\mathcal{D}_{G,f}(p_X\|q_X)
&\triangleq G\!\big(D_f(p_X\|q_X)\big).
\end{align}
\end{definition}

\begin{definition}[\cite{abin2026class:isit}]
For random variables \(X\) and \(Y\), the corresponding mutual information \(I_{G,f}(X;Y)\) is defined by
\begin{align*}
I_{G,f}(X;Y)
&\triangleq \min_{q_Y} \sum_x p_X(x)\,\mathcal{D}_{G,f}(p_{Y|X=x}\|q_Y)
\\&= \min_{q_Y} \sum_x p_X(x)\, G\!\big(D_f(p_{Y|X=x}\|q_Y)\big).
\end{align*}
\end{definition}
\begin{remark}
It follows immediately that \(I_{G,f}(X;Y)\) coincides with Shannon's mutual information when \(G(x)=x\) and \(f(x)=x\log x\). Furthermore, choosing
\[
G(x)=\frac{1}{\alpha-1}\log\bigl(1+(\alpha-1)x\bigr),
\qquad
f(x)=\frac{x^\alpha-1}{\alpha-1}
\]
recovers Rényi information for \(\alpha\in(0,1)\cup(1,\infty)\).
In general, however, \(I_{G,f}(X;Y)\) need not be symmetric, i.e.,
\[
I_{G,f}(X;Y) \neq I_{G,f}(Y;X),
\]
which differs from the behavior of Shannon's mutual information. We refer to \cite{abin2026class:arx} for several basic properties of \(I_{G,f}\).
\end{remark}
\begin{definition}[\cite{abin2026class:isit}]
An admissible pair \((G,f)\) is said to be \emph{divergence-subadditive} if, for any distributions \(q_Y\ll r_Y\) on a finite set \(\mathcal{Y}\) and any distributions \(q_Z\ll r_Z\) on a finite set \(\mathcal{Z}\), the following inequality holds:
\begin{align}
  \mathcal{D}_{G,f}(q_Y q_Z \,\|\, r_Y r_Z)
  \;&\leq\;
  \mathcal{D}_{G,f}(q_Y \,\|\, r_Y)\nonumber\\
  &+
  \mathcal{D}_{G,f}(q_Z \,\|\, r_Z). \label{gen-def-1}
\end{align}
Moreover, a pair \((G,f)\) is called \emph{information-subadditive} if, for any random variables \(X,Y,Z\) defined on finite sets, we have
\begin{align}
  I_{G,f}(X;YZ)&\leq I_{G,f}(X;Y) + I_{G,f}(X;Z)\label{sub_I_def:2}
\end{align}
whenever \(Y \rightarrow X \rightarrow Z\) forms a Markov chain, i.e., \(p_{XYZ} = p_X p_{Y|X} p_{Z|X}\).
\end{definition}
The following theorem establishes that, under certain mild assumptions, information-subadditivity and divergence-subadditivity are equivalent.
\begin{theorem}[\cite{abin2026class:isit}]\label{gen-thm1}
    Let \((G,f)\) be an admissible pair (see Definition \ref{def:admissible}), where \(G\) is a continuous function satisfying \(G(x)>0\) for all \(x>0\), and \(f(x)\) is strictly convex at \(x=1\). Then \((G,f)\) is \emph{divergence-subadditive} if and only if it is \emph{information-subadditive}.
\end{theorem}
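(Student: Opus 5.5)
The plan is to prove the two implications separately. Both go through a single-letter decomposition of \(I_{G,f}\), namely the optimization over \(q_{YZ}\), which is available because the alphabets are finite and \(p_{YZ|X}=p_{Y|X}p_{Z|X}\).

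\emph{Divergence-subadditivity \(\Rightarrow\) information-subadditivity.} Let \(q_Y^*\) and \(q_Z^*\) be minimizers in the definitions of \(I_{G,f}(X;Y)\) and \(I_{G,f}(X;Z)\). A minimizer exists: on a finite alphabet the objective is lower semicontinuous in \(q\), since \(D_f\) is lower semicontinuous and \(G\) is continuous and non-decreasing. We may assume that \(p_{Y|X=x}\ll q_Y^*\) for every \(x\) with \(p_X(x)>0\); otherwise the corresponding term is \(+\infty\), or equals \(G\) at the boundary value, and the inequality becomes trivial. Choose the product \(q_{YZ}=q_Y^*q_Z^*\) in the minimization defining \(I_{G,f}(X;YZ)\). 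The Markov chain gives \(p_{YZ|X=x}=p_{Y|X=x}p_{Z|X=x}\), so applying \eqref{gen-def-1} pointwise in \(x\) and averaging over \(p_X\) yields
\[
I_{G,f}(X;YZ)\le \sum_x p_X(x)\Big[\mathcal{D}_{G,f}(p_{Y|X=x}\|q_Y^*)+\mathcal{D}_{G,f}(p_{Z|X=x}\|q_Z^*)\Big],
\]
and the right-hand side equals \(I_{G,f}(X;Y)+I_{G,f}(X;Z)\). This direction uses none of the extra hypotheses.

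\emph{Information-subadditivity \(\Rightarrow\) divergence-subadditivity.} Fix \(q_Y\ll r_Y\) and \(q_Z\ll r_Z\). The idea is to build a binary-type \(X\) whose minimizing output distribution is forced to be close to \(r\). Take \(X\in\{0,1\}\) with \(p_X(1)=\epsilon\), \(p_{Y|X=1}=q_Y\), \(p_{Z|X=1}=q_Z\), and, for \(X=0\), choose \(p_{Y|X=0}=r_Y\) and \(p_{Z|X=0}=r_Z\). The mixture is then \(p_Y=\epsilon q_Y+(1-\epsilon)r_Y\).
\begin{enumerate}
\item \emph{Upper bound.} Taking \(q=r\) gives \(I_{G,f}(X;Y)\le \epsilon\,\mathcal{D}_{G,f}(q_Y\|r_Y)\), and similarly for \(Z\).
\item \emph{Lower bound.} We need
\[
I_{G,f}(X;YZ)\ge \epsilon\,\mathcal{D}_{G,f}(q_Yq_Z\|r_Yr_Z)-o(\epsilon).
\]
The minimizer \(q_{YZ}^\epsilon\) must satisfy \(G(D_f(r_Yr_Z\|q^\epsilon_{YZ}))\le \epsilon\, G(\cdot)/(1-\epsilon)\to 0\). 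Since \(G(x)>0\) for \(x>0\) and \(G\) is continuous, this forces \(D_f(r_Yr_Z\|q^\epsilon)\to 0\). Strict convexity of \(f\) at \(1\) then gives \(q^\epsilon\to r_Yr_Z\), for instance through a Pinsker-type consequence of strict convexity.
\item \emph{Conclusion.} Lower semicontinuity of \(D_f(q_Yq_Z\|\cdot)\), together with the monotonicity and continuity of \(G\), gives \(\liminf\) of the \(X=1\) term \(\ge \mathcal{D}_{G,f}(q_Yq_Z\|r_Yr_Z)\). Dividing the subadditivity inequality by \(\epsilon\) and letting \(\epsilon\to0\) yields \eqref{gen-def-1}.
\end{enumerate}

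\emph{Main obstacle.} The delicate step is the lower bound (item 2). The first complication is supports: \(r_Yr_Z\) may have larger support than \(q_Yq_Z\). Restricting the alphabets to the support of \(r\) is harmless, because \(q\ll r\). The second complication is the boundary case \(D_f=\mathsf{D_m}(f)\), where \(G\) may be undefined or infinite. There the inequality is either vacuous or handled by the convention on \(G\) at the endpoint. I would first try to obtain the convergence \(q^\epsilon\to r\) via compactness of the simplex and lower semicontinuity of \(D_f(r\|\cdot)\), with strict convexity at \(1\) ensuring that \(D_f(r\|q)=0\) only when \(q=r\).
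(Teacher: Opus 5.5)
The paper gives no proof of this statement; it quotes it from \cite{abin2026class:isit}, so there is no in-paper argument to compare with. Judged on its own merits, your proof is correct, and each hypothesis is used where it is needed. In the converse, $G(x)>0$ for $x>0$ and strict convexity of $f$ at $1$ force the near-optimal output distribution for $I_{G,f}(X;YZ)$ to converge to $r_Yr_Z$. This uses compactness of the simplex, lower semicontinuity of $D_f(r_Yr_Z\|\cdot)$, and the fact that $D_f(r\|q)=0$ implies $q=r$. Continuity of $G$ then lets you pass to the limit. The forward direction indeed uses none of the extra hypotheses.

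Three tidy-ups are worth making.

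\emph{Minimizers.} $\mathcal{D}_{G,f}(p\|q)$ is only defined for $p\ll q$, and $G$ only on $[0,\mathsf{D_m}(f))$. So the feasible set in the definition of $I_{G,f}$ is not closed, and lower semicontinuity plus compactness does not by itself give a minimizer. You never need one.
\begin{enumerate}
\item Forward direction: use an arbitrary feasible product $q_Yq_Z$, then take infima over $q_Y$ and $q_Z$ separately.
\item Converse: set $a=D_f(q_Y\|r_Y)$, $b=D_f(q_Z\|r_Z)$, $c=D_f(q_Yq_Z\|r_Yr_Z)$. Take any feasible $q^\epsilon_{YZ}$ whose objective is within $\epsilon^2$ of the infimum. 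Comparing with the feasible choice $r_Yr_Z$ gives $(1-\epsilon)G(D_f(r_Yr_Z\|q^\epsilon_{YZ}))\le \epsilon G(c)+\epsilon^2$. Dropping the nonnegative $X=0$ term and using information-subadditivity gives $G(D_f(q_Yq_Z\|q^\epsilon_{YZ}))\le G(a)+G(b)+\epsilon$. These two bounds are all you use.
\end{enumerate}

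\emph{No $o(\epsilon)$ term.} Once $q^\epsilon_{YZ}\to r_Yr_Z$, it is eventually positive on $\mathrm{supp}(r_Yr_Z)\supseteq\mathrm{supp}(q_Yq_Z)$. Hence $D_f(q_Yq_Z\|q^\epsilon_{YZ})\to c$ outright, using $f(0)<\infty$. Continuity of $G$ at $c$ then finishes the argument.

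\emph{Boundary case.} This concern does not arise. If $q\ll r$, equality in Lemma~\ref{domain:lem} at any ratio $q(y)/r(y)>0$ forces $f(t)=f(0)+t\lim_{s\to\infty}f(s)/s$ for all $t\ge0$, i.e.\ $f$ affine. That is excluded by strict convexity at $1$, so $D_f(q\|r)<\mathsf{D_m}(f)$ strictly and $G$ is always evaluated inside its domain.
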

Although inequalities \eqref{gen-def-1} and \eqref{sub_I_def:2} are equivalent, it is more convenient to work with \eqref{gen-def-1}, since the definition of \(I_{G,f}\) involves a minimization problem. The following result from \cite{abin2026class:isit} shows that, for a certain class of functions \(G\), the subadditivity of \(\mathcal{D}_{G,f}\) over arbitrary finite alphabets already follows from its subadditivity for binary random variables.
\begin{lemma}[\cite{abin2026class:isit}]\label{gen-lemma1}
Let $G:[0,\nu)\rightarrow[0,+\infty)$ be an increasing function with $G(0)=0$ such that the range of $G$ is $[0, \infty)$. Moreover, assume that for every $c>0$, the function
\begin{align}
x \mapsto G^{-1}\big(G(x) + G(c)\big)\label{cond:G}
\end{align}
is concave on $[0,\nu)$.  
Let $f$ be a convex function on $[0,\infty)$ with $f(1)=0$ such that $\mathsf{D_m}(f)\leq \nu$, where $\mathsf{D_m}(f)$ is defined in \eqref{defDmf}.  Assume that \eqref{gen-def-1} holds for any distributions $q_Y,r_Y$ on $\mathcal{Y}$ and $q_Z,r_Z$ on $\mathcal{Z}$ when both alphabets $\mathcal{Y}$ and $\mathcal{Z}$ are binary (i.e., of size two).  
Then \eqref{gen-def-1} also holds for this pair and for any distributions $q_Y,r_Y$ and $q_Z,r_Z$ defined on finite alphabets $\mathcal{Y}$ and $\mathcal{Z}$ of arbitrary size.
\end{lemma}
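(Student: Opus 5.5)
Write $D_Y=D_f(q_Y\|r_Y)$ and $D_Z=D_f(q_Z\|r_Z)$. Since $G$ is increasing with range $[0,\infty)$, inequality \eqref{gen-def-1} is equivalent to
\[
D_f(q_Yq_Z\|r_Yr_Z)\le \Phi(D_Y,D_Z),\qquad \Phi(a,b)\triangleq G^{-1}\bigl(G(a)+G(b)\bigr).
\]
The plan is to reduce one alphabet at a time to binary, using the concavity of $x\mapsto\Phi(x,c)$. The reduction in $\mathcal{Z}$ is symmetric to the one in $\mathcal{Y}$, so I describe only the latter. Fix $(q_Z,r_Z)$ on $\mathcal{Z}$ and set $c=D_Z$.

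The starting observation is a decomposition for the product:
\[
D_f(q_Yq_Z\|r_Yr_Z)=\sum_y r_Y(y)\,h\!\left(\frac{q_Y(y)}{r_Y(y)}\right),\qquad h(t)\triangleq\sum_z r_Z(z)f\!\left(t\,\frac{q_Z(z)}{r_Z(z)}\right).
\]
The function $h$ is convex in $t$. Thus both $D_f(q_Y\|r_Y)$ and the product divergence are linear functionals of the ``likelihood-ratio distribution'' $\mu=\sum_y r_Y(y)\delta_{q_Y(y)/r_Y(y)}$. This $\mu$ is a probability measure on $[0,\infty)$ with mean $1$. Conversely, every finitely supported probability measure on $[0,\infty)$ with mean $1$ arises from some pair on a finite alphabet.

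For such $\mu$, write
\[
A(\mu)=\int f\,d\mu,\qquad B(\mu)=\int h\,d\mu .
\]
We need $B(\mu)\le\Phi(A(\mu),c)$ for every admissible $\mu$; the hypothesis gives this for $\mu$ supported on at most two points. The set of admissible $\mu$ with a given support is a polytope cut out by the single linear constraint $\int t\,d\mu=1$. Its extreme points are therefore measures with at most two atoms, one on each side of $1$, or the point mass $\delta_1$.

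Write an arbitrary $\mu$ as a convex combination $\mu=\sum_i\lambda_i\mu_i$ of two-point mean-one measures. Linearity gives
\[
B(\mu)=\sum_i\lambda_iB(\mu_i)\le\sum_i\lambda_i\Phi\bigl(A(\mu_i),c\bigr)\le\Phi\Bigl(\sum_i\lambda_iA(\mu_i),c\Bigr)=\Phi\bigl(A(\mu),c\bigr).
\]
The first inequality is the binary hypothesis and the second is Jensen's inequality for the concave map $x\mapsto\Phi(x,c)$. Concavity is needed on $[0,\nu)$, which contains all the values $A(\mu_i)$ because $A(\mu_i)\le\mathsf{D_m}(f)\le\nu$ by Lemma~\ref{domain:lem}. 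This settles arbitrary $\mathcal{Y}$ with binary $\mathcal{Z}$. For a general $\mathcal{Z}$, I would now fix the arbitrary $(q_Y,r_Y)$ and run the same argument in $\mathcal{Z}$. Since $\Phi$ is symmetric, I use concavity of $\Phi(D_Y,\cdot)$, and the two-point case is supplied by the step just proved.

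The main obstacle is the decomposition step. I must check that every two-point mean-one $\mu_i$ corresponds to a genuine binary pair $q\ll r$. Atoms at $t=0$ are allowed; they correspond to $q(y)=0$, which is fine. I also need to handle possible infinite values of $f(0)$ and the boundary $D_f=\mathsf{D_m}(f)$, possibly through a limiting or continuity argument at $\nu$. The explicit decomposition would be done by repeatedly pairing an atom below $1$ with an atom above $1$, peeling off the largest two-point mean-one component, and continuing, which terminates on a finite support.
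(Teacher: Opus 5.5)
The paper does not prove this lemma; it quotes it from \cite{abin2026class:isit}, so there is no in-paper argument to compare against, and your proposal has to stand on its own. It does stand: it is a complete and correct proof.

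The argument has four steps, each valid:
\begin{enumerate}
\item You write $D_f(q_Y\|r_Y)$ and $D_f(q_Yq_Z\|r_Yr_Z)$ as integrals of $f$ and $h$ against the likelihood-ratio law $\mu$.
\item You decompose $\mu$ into mean-one measures with at most two atoms.
\item You apply the binary hypothesis to each piece and then Jensen's inequality for the concave map $x\mapsto G^{-1}(G(x)+G(c))$.
\item You bootstrap from (arbitrary, binary) to (arbitrary, arbitrary) using the symmetry of $G^{-1}(G(a)+G(b))$.
\end{enumerate}
In effect you are reproving the Harremo\"es--Vajda fact that the joint range of two $f$-divergences (here $D_f$ and $D_h$ of the pair $(q_Y,r_Y)$) is the convex hull of its binary part. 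You then use that the hypograph of a concave function is convex.

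The points you flag as obstacles are not real ones.
\begin{itemize}
\item \emph{Extreme points.} The claim is just that basic feasible solutions of $\{w\ge 0:\ \sum_t w_t=1,\ \sum_t w_t t=1\}$ have at most two nonzero entries, so no explicit peeling procedure is needed.
\item \emph{Boundary, finite case.} Suppose $\mathsf{D_m}(f)<\infty$. The secant slopes $(f(t)-f(0))/t$ are non-decreasing, so $f(t)\le f(0)+t\lim_{s\to\infty}f(s)/s$. Hence $D_f(q\|r)\le \mathsf{D_m}(f)$ for $q\ll r$. Equality would require equality at every atom, in particular at some atom $t\ge 1$. That forces $f$ to be affine, i.e.\ $f(t)=f(0)(1-t)$, in which case every divergence vanishes. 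So all the values $A(\mu_i)$, $D_Y$, $D_Z$ and the product divergence lie in $[0,\nu)$, and no limiting or continuity argument is required.
\item \emph{Boundary, infinite case.} If $\mathsf{D_m}(f)=\infty$, then $\nu=\infty$, and real-valuedness of $f$ keeps every divergence finite.
\item \emph{The case $c=0$.} The hypothesis does not cover it, but it is trivial because $G^{-1}(G(x)+G(0))=x$.
\item \emph{The atom $\delta_1$.} It corresponds to $q_Y=r_Y$, where \eqref{gen-def-1} holds with equality.
\end{itemize}
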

\begin{lemma}[\cite{abin2026class:arx}]\label{corol:g}
Let \(u(t)\) be any positive concave function on \([0,\nu)\). Define a function \(G(x)\) on \([0,\nu)\) by
\[
G(x) = \int_0^x \frac{1}{u(t)}\,dt.
\]
Then \(G\) satisfies the assumptions in Lemma~\ref{gen-lemma1}.
\end{lemma}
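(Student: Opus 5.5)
We must show that $G(x)=\int_0^x dt/u(t)$, with $u$ positive and concave on $[0,\nu)$, satisfies the hypotheses of Lemma~\ref{gen-lemma1}. These are: $G$ is increasing with $G(0)=0$, the range of $G$ is $[0,\infty)$, and for every $c>0$ the map $x\mapsto G^{-1}(G(x)+G(c))$ is concave on $[0,\nu)$.

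The first property is immediate. Since $u>0$, we have $G'=1/u>0$, and $G(0)=0$ by construction.

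For the range, I would interpret the hypothesis so that it holds, i.e.\ take $\nu$ to be the maximal interval endpoint on which $u$ stays positive and concave. Two cases arise.
\begin{itemize}
\item If $\nu<\infty$, concavity and positivity give $u(t)\le L(\nu-t)$ near $\nu$ when $u(\nu^-)=0$. Then $\int^{\nu}dt/u=\infty$.
\item If $\nu=\infty$, a positive concave function on $[0,\infty)$ has at most linear growth, $u(t)\le a+bt$. Hence $\int_0^\infty dt/u\ge\int_0^\infty dt/(a+bt)=\infty$.
\end{itemize}
In either case $G$ is a bijection onto $[0,\infty)$, provided the boundary behaviour in the first case holds. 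This is presumably the intended setting. I would state it as an implicit assumption (e.g.\ $u(\nu^-)=0$ when $\nu<\infty$).

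The main step is concavity. Fix $c$ and set $h(x)=G^{-1}(G(x)+G(c))$, so that $G(h(x))=G(x)+G(c)$. Differentiating gives
\[
\frac{h'(x)}{u(h(x))}=\frac{1}{u(x)}, \qquad\text{i.e.}\qquad h'(x)=\frac{u(h(x))}{u(x)}.
\]
Differentiating once more,
\[
h''(x)=\frac{u'(h)\,h'\,u(x)-u(h)\,u'(x)}{u(x)^2}=\frac{u(h)}{u(x)^2}\bigl(u'(h(x))-u'(x)\bigr).
\]
Since $G(c)\ge0$, we have $h(x)\ge x$. Concavity of $u$ makes $u'$ non-increasing, so $u'(h(x))\le u'(x)$, which yields $h''\le0$.

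If $u$ is not differentiable, I would avoid second derivatives. The function $h'(x)=u(h(x))/u(x)$ exists since $G$ is $C^1$. It suffices to show that $\log h'=\log u(h(x))-\log u(x)$ is non-increasing. Using one-sided derivatives, the derivative of $\log u(h(x))$ is $(u'_\pm(h)/u(h))\,h'=u'_\pm(h)/u(x)$, while that of $\log u(x)$ is $u'_\pm(x)/u(x)$. The monotonicity of one-sided derivatives of the concave $u$ then gives the same conclusion.

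This concavity computation is the crux and is straightforward. The only real obstacle is the range condition, which depends on the boundary behaviour of $u$ as discussed above.
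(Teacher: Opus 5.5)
Your concavity argument is correct. From $G(h(x))=G(x)+G(c)$ you get $h'=u(h)/u(x)$ and $h''=u(h)\,\bigl(u'(h)-u'(x)\bigr)/u(x)^2\le 0$, because $h(x)\ge x$ and $u'$ is non-increasing. The one-sided-derivative version handles non-smooth $u$. Your $\nu=\infty$ case of the range condition is also fine: a positive concave function on $[0,\infty)$ lies below a supporting line $a+bt$ with $b\ge 0$, so $\int_0^\infty dt/u=\infty$. You are also right that the statement, read literally, needs an extra hypothesis when $\nu<\infty$; for example, $u\equiv 1$ on $[0,1)$ gives $G(x)=x$ with range $[0,1)$. (The paper imports this lemma from its reference without proof, so there is no in-paper argument to compare against.)

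The gap is in your finite-$\nu$ case. You claim that concavity, positivity and $u(\nu^-)=0$ give $u(t)\le L(\nu-t)$ near $\nu$, and this is false. Concavity puts the graph \emph{above} the chord to $(\nu,0)$, so it only gives the lower bound $u(t)\ge u(t_0)\frac{\nu-t}{\nu-t_0}$. That bounds $1/u$ from above and says nothing about divergence of $\int^\nu dt/u$. A concrete counterexample is $u(t)=\sqrt{\nu-t}$. It is positive and concave on $[0,\nu)$ with $u(\nu^-)=0$, yet $G(x)=2\sqrt{\nu}-2\sqrt{\nu-x}$ has range $[0,2\sqrt{\nu})$. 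Then $x\mapsto G^{-1}(G(x)+G(c))$ is not even defined on all of $[0,\nu)$. So the implicit assumption you propose does not secure the range condition of Lemma~\ref{gen-lemma1}.

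What you need instead is a finite left slope at the endpoint. If $u(\nu^-)=0$ and $u'_-(\nu)>-\infty$, then, since $-u'$ is non-decreasing, $u(t)=\int_t^\nu(-u'(s))\,ds\le |u'_-(\nu)|\,(\nu-t)$. Divergence of $\int^\nu dt/u$ then follows as you intended. Alternatively, simply assume $\int_0^\nu dt/u(t)=\infty$ outright. Either condition covers the transforms actually used in the paper, such as $u(x)=1-x^2/\nu^2$ and $u(x)=4-(x-1)^2$ on $[0,3)$, which vanish at the endpoint with finite slope. With that hypothesis in place, the rest of your proof goes through.
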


\subsection{New Results on the Subadditivity of \(I_{G,f}\)}
For a given function \(G\), our goal is to characterize the functions \(f\) such that the admissible pair \((G,f)\) is divergence-subadditive (equivalently, information-subadditive). In \cite{abin2026class:isit}, the authors characterized subadditive pairs \((G,f)\) when
\(G(x)\in \{x,\log(1+x),-\log(1-x)\}\). In this paper, we modify their proof to characterize subadditive pairs \((G,f)\) for other choices of the function \(G\). The following theorem characterizes divergence-subadditive pairs \((G,f)\) when the function \(G(x)\) is concave.
\begin{theorem}\label{th:concave:G}
Assume that \[
G(x) = \int_0^x \frac{1}{u(t)}\,dt
\]
for a positive increasing concave function \(u(t)\) on \([0,\infty)\).  Let \(f\) be a four-times continuously differentiable convex function with \(f(1)=0\). For a positive parameter \(\gamma\), define the set
\begin{align}
    \mathcal{T}_{\gamma}=\Bigl\{g:[0,\infty)\rightarrow [0,\infty)\;\Big|\;&x^2g''(x)-\gamma\, g(\alpha)\,g\!\left(\frac{x}{\alpha}\right)\leq 0,\nonumber\\& \forall x>0,~\forall\alpha>0\Bigr\}.
\end{align}
If \(x^2f''(x)\in \mathcal{T}_{\gamma}\) for
\[
0<\gamma\leq \frac{G^{-1}\!\bigl(2G(\mathsf{D_m}(f))\bigr)-2\mathsf{D_m}(f)}{\mathsf{D_m}^{2}(f)},
\]
then the pair \((G,f)\) is divergence-subadditive. Here, \(\mathsf{D_m}(f)\) is as defined in \eqref{defDmf}.
\end{theorem}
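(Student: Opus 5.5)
The plan is to reduce to binary alphabets, bound the left-hand side through a comparison inequality for \(G^{-1}(G(A)+G(B))\), and then invoke a criterion of the type established for \(G(x)=x\) in \cite{abin2026class:isit}, rescaled by \(\gamma\).

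First, by Lemma~\ref{corol:g}, \(G\) satisfies the hypotheses of Lemma~\ref{gen-lemma1}, taking \(\nu=\infty\) since \(u\) is defined on \([0,\infty)\). So it suffices to prove \eqref{gen-def-1} when \(\mathcal{Y},\mathcal{Z}\) are binary. Write \(A=D_f(q_Y\|r_Y)\) and \(B=D_f(q_Z\|r_Z)\). By Lemma~\ref{domain:lem}, both lie in \([0,\mathsf{D_m}(f)]\). Since \(G\) is increasing, the target inequality is equivalent to
\[
D_f(q_Yq_Z\|r_Yr_Z)\le G^{-1}\bigl(G(A)+G(B)\bigr).
\]

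The key analytic step is the comparison
\[
G^{-1}(G(A)+G(B))\ge A+B+\gamma AB \quad\text{for } A,B\in[0,\mathsf{D_m}],
\]
with \(\gamma\) as in the statement. To prove it, set \(H(A,B)=G^{-1}(G(A)+G(B))-A-B\). Then \(H(A,0)=0\). Let \(K=G^{-1}\) and \(W=G(A)+G(B)\). Since \(K'=u\circ K\), we get \(\partial_A H=u(K(W))/u(A)-1\). This is nonnegative because \(u\) is increasing and \(K(W)\ge A\), so \(H\) is nonnegative and increasing.

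We need \(H(A,B)/(AB)\ge H(\mathsf{D_m},\mathsf{D_m})/\mathsf{D_m}^2\), i.e.\ that \(H(A,B)/(AB)\) is smallest at the corner. I would show that \(A\mapsto H(A,B)\) is convex with \(H(0,B)=0\), so \(H(A,B)/A\) is nondecreasing in \(A\); by symmetry the same holds in \(B\). Convexity of \(A\mapsto H(A,B)\) should follow from \(\partial_A^2 H=\frac{u'(K)u(K)}{u(A)^2}-\frac{u(K)u'(A)}{u(A)^2}\ge0\). This is exactly \(u'(K)\ge u'(A)\)... but concavity of \(u\) gives the opposite sign. So this is the \textbf{main obstacle}.

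Instead I would try to show directly that \(\phi(A)=H(A,B)/A\) is nondecreasing. Since \(H(A,B)=\int_0^A \partial_A H\), it suffices that \(\partial_A H\) is nondecreasing in \(A\) on average. Failing that, I would use \(\gamma\le\) the stated bound in the weaker form of the corner value along the diagonal, and verify it via the concavity of \(x\mapsto G^{-1}(G(x)+G(c))\) from Lemma~\ref{gen-lemma1}. That concavity gives \(G^{-1}(G(x)+G(c))\) lying above its chord, and this chord bound should yield the product-type lower bound after rearrangement.

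Given the comparison, it suffices to show
\[
D_f(q_Yq_Z\|r_Yr_Z)\le A+B+\gamma AB
\]
for binary distributions. This is exactly the divergence-subadditivity condition for a scaled logarithmic transform: with \(\tilde f=\gamma f\), it is the \(G(x)=\log(1+x)\) inequality \(1+\tilde D_{12}\le(1+\tilde A)(1+\tilde B)\). So I invoke the sufficient condition from \cite{abin2026class:isit} for \(G(x)=\log(1+x)\), applied to \(\gamma f\). Its differential form, written in terms of \(g(x)=x^2f''(x)\), reads \(x^2g''(x)-g(\alpha)g(x/\alpha)\le0\) for \(\tilde g=\gamma g\). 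Dividing by \(\gamma\), this is precisely the condition \(x^2g''(x)-\gamma g(\alpha)g(x/\alpha)\le0\) defining \(\mathcal{T}_\gamma\).

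If that criterion is only available as a proof rather than a statement, I would rerun its argument with the extra factor \(\gamma\). That argument perturbs the binary distributions and checks a second-order (Hessian) condition along the product direction; the cross term \(\gamma AB\) produces \(\gamma g(\alpha)g(x/\alpha)\) after the change of variables \(x=\) product of likelihood ratios. Combining the three steps gives \eqref{gen-def-1}, and Theorem~\ref{gen-thm1} then transfers the result to information-subadditivity.
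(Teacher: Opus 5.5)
\textbf{Overall structure.} Your architecture matches the paper's. First, apply the $\log(1+x)$ criterion of \cite{abin2026class:isit} to $\gamma f$ to get $D_f(q_Yq_Z\|r_Yr_Z)\le A+B+\gamma AB$; your rescaling computation is exactly Lemma~\ref{lem:POI9}. Note that the relevant criterion is the one for $\log(1+x)$, not for $G(x)=x$ as you say at the start. Then combine this with the comparison $A+B+\gamma AB\le G^{-1}(G(A)+G(B))$ on $[0,\mathsf{D_m}(f)]^2$ and monotonicity of $G$. Two of your steps are unnecessary. The binary reduction is not needed, because the cited criterion already gives the product bound on all finite alphabets. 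The final transfer to information-subadditivity is not part of the statement.

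\textbf{The gap.} The problem is in your proof of the comparison inequality: the direction of monotonicity is reversed. If $H(A,B)/A$ and $H(A,B)/B$ were nondecreasing, $H(A,B)/(AB)$ would be \emph{largest} at the corner $(\mathsf{D_m},\mathsf{D_m})$. That gives the wrong inequality. What you need is that $H(A,B)/A$ is \emph{nonincreasing}, and concavity gives exactly this. Your own formula $\partial_A^2H=\frac{u(K)}{u(A)^2}\bigl(u'(K)-u'(A)\bigr)\le 0$ (since $K\ge A$ and $u'$ is nonincreasing) shows that $A\mapsto H(A,B)$ is concave. Alternatively, cite Lemma~\ref{corol:g} together with Lemma~\ref{gen-lemma1}.
\begin{itemize}
\item Since $H(0,B)=0$, the chord slope $H(A,B)/A$ decreases, so $H(A,B)/A\ge H(m,B)/m$ with $m=\mathsf{D_m}(f)$.
\item The same argument applies to $B\mapsto H(m,B)$, which is concave and vanishes at $0$. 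This gives $H(m,B)/B\ge H(m,m)/m$.
\item Hence $H(A,B)\ge \frac{G^{-1}(2G(m))-2m}{m^2}\,AB$, which is the required comparison.
\end{itemize}
So the ``main obstacle'' does not exist. Your plan to prove $H(A,B)/A$ nondecreasing directly would fail: it is false whenever $u$ is strictly concave. For instance, $G(x)=\sqrt{x}$ (the $p=1/2$ case of Example~\ref{ex:con:hh}) gives $H=2\sqrt{AB}$, so $H/A=2\sqrt{B/A}$ strictly decreases.

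\textbf{Your fallback is the right route.} The chord bound from concavity of $x\mapsto G^{-1}(G(x)+G(c))$ is exactly the paper's Lemma~\ref{lem:pok:1}, but you should carry it out rather than leave it as ``should yield \ldots after rearrangement''. Write $a=\lambda m$ and $b=\eta m$. Apply the chord inequality first in $a$, then in $b$. Then use $G^{-1}(G(m)+G(0))=m$ and $G^{-1}(G(0)+G(b))=b$ to collapse the four corner terms to $a+b+\frac{G^{-1}(2G(m))-2m}{m^2}ab$.
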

\begin{proof}
    The proof can be found in Appendix~\ref{th:concave:G:p}.
\end{proof}

\begin{remark} The set \(\mathcal{T}_{\gamma}\) for $\gamma=1$ was already introduced in \cite{abin2026class:isit} when studying the special case of \(G(x)=\log(1+x)\). Several parametric families of functions belonging to \(\mathcal{T}_{\gamma}\) for $\gamma=1$ were also given in \cite{abin2026class:arx}. One may reuse these parametric functions for \(\mathcal{T}_{\gamma}\) when $\gamma\neq 1$ after suitable adjustments. 
\end{remark}
\begin{example}\label{ex:con:hh}For \(u(x)=\frac{x^{1-p}}{p}\) on \([0,+\infty)\), we obtain \(G(x) = x^p\) for \(0 < p \leq 1\). Set \(\widetilde{a}_p=5\bigl(2^{1/p}-2\bigr)\) and
\[
\Theta_p=\frac{-12-\widetilde{a}_p+\sqrt{24\widetilde{a}_p+169}}{5-\widetilde{a}_p}.
\]
Choose \(\epsilon\) such that \(\frac{1}{5}<\epsilon\leq \Theta_p\). Then, for \(\gamma=\frac{\widetilde{a}_p}{20+4\epsilon}\), the non-concave and non-negative function
\[
\ell_{\epsilon}(x)=\sqrt{x}\left(1+\epsilon \cos\left(\log x\right)\right)
\]
belongs to \(\mathcal{T}_{\gamma}\), and the corresponding function
\[
f_{\epsilon}(x)=(2+0.4\epsilon)\left(\sqrt{x}-1\right)^2+0.8\epsilon \sqrt{x}\left(1-\cos\left(\log x\right)\right)
\]
is such that the pair \((x^p,f_{\epsilon})\) is divergence-subadditive.
\end{example}
For details of this example, please see Appendix \ref{ex:con:hh:p}.

The following theorem characterizes divergence-subadditive pairs \((G,f)\) for a class of convex functions \(G(x)\).
\begin{theorem}\label{th:convex:G}
Assume that \[
G(x) = \int_0^x \frac{1}{u(t)}\,dt
\]
for a positive decreasing concave function \(u(t)\) on \([0,\nu)\) for some $\nu>0$. Then \(G(x)\) is convex. Let \(f\) be a four-times continuously differentiable convex function with \(f(1)=0\). For a positive parameter \(\gamma\), define the set
\begin{align}
    \mathcal{T}_{\gamma}^{-}=\Bigl\{g:[0,\infty)\rightarrow [0,\infty)\;\Big|\;&x^2g''(x)+\gamma\, g(\alpha)\,g\!\left(\frac{x}{\alpha}\right)\leq 0,\nonumber\\& \forall x>0,~\forall\alpha>0\Bigr\}.
\end{align}
If \(x^2f''(x)\in \mathcal{T}_{\gamma}^{-}\) for
\[
\gamma\geq  \frac{2\mathsf{D_m}(f)-G^{-1}\!\bigl(2G(\mathsf{D_m}(f))\bigr)}{\mathsf{D_m}^{2}(f)},
\]
then the pair \((G,f)\) is divergence-subadditive. Here, \(\mathsf{D_m}(f)\) is as defined in \eqref{defDmf}.
\end{theorem}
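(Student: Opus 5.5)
Throughout, write $D_1=D_f(q_Y\|r_Y)$, $D_2=D_f(q_Z\|r_Z)$, $D=D_f(q_Yq_Z\|r_Yr_Z)$ and $M=\mathsf{D_m}(f)$. Since $G$ is increasing, divergence-subadditivity is equivalent to
\[
D\le G^{-1}\bigl(G(D_1)+G(D_2)\bigr).
\]
The proof will mirror the concave case (Theorem~\ref{th:concave:G}), but with the comparison inequality reversed.

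\textbf{Convexity of $G$.} This is immediate: $G'=1/u$, and $u$ is positive and decreasing, so $G'$ is increasing.

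\textbf{Comparison bound.} The first real step is to show that for $A,B\in[0,M]$,
\[
G^{-1}\bigl(G(A)+G(B)\bigr)\ \ge\ A+B-\gamma_0 AB,\qquad \gamma_0=\frac{2M-G^{-1}(2G(M))}{M^2}.
\]
Here $\gamma_0\ge 0$ because $G$ is convex with $G(0)=0$, so $G$ is superadditive and $G^{-1}(2G(M))\le 2M$. Set $\phi(A,B)=G^{-1}(G(A)+G(B))$. Then $\phi(A,0)=A$, $\phi(0,B)=B$, and $\phi(M,M)=2M-\gamma_0M^2$ by the choice of $\gamma_0$. By Lemma~\ref{corol:g}, $\phi$ is concave in each variable separately. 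The error $\psi=\phi-(A+B-\gamma_0AB)$ vanishes on the two axes and at $(M,M)$. For fixed $B$ it is concave in $A$ (the subtracted bilinear term is affine in $A$), so it suffices to check $\psi\ge 0$ at $A=0$ and $A=M$. Along $A=M$, the function $\psi(M,\cdot)$ is concave in $B$ and vanishes at $B=0$ and $B=M$, hence it is nonnegative on $[0,M]$. This gives the bound on the square. Because $\gamma\ge\gamma_0$ and $AB\ge 0$, it follows that
\[
A+B-\gamma AB\le \phi(A,B).
\]
It is therefore enough to prove
\[
D\le D_1+D_2-\gamma D_1D_2 .
\]

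\textbf{Scaled logarithmic criterion.} This is the main step. The plan is to adapt the argument of \cite{abin2026class:isit} for $G=-\log(1-x)$, which is the convex analogue of the $\log(1+x)$ case tied to $\mathcal{T}_1$ in the concave theorem. Introduce the rescaled divergence $\gamma D$. The inequality $D\le D_1+D_2-\gamma D_1D_2$ is equivalent to
\[
1-\gamma D\ \ge\ (1-\gamma D_1)(1-\gamma D_2),
\]
which is multiplicativity-type subadditivity for $-\log(1-\gamma D_f)$. I would run the earlier argument with $f$ replaced by $\gamma f$, so that $g$ becomes $\gamma g$ and the condition $x^2(\gamma g)''+(\gamma g)(\alpha)(\gamma g)(x/\alpha)\le 0$ becomes $x^2g''+\gamma g(\alpha)g(x/\alpha)\le 0$, i.e.\ $g\in\mathcal{T}^-_\gamma$. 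Concretely, the steps are:
\begin{enumerate}
\item Reduce to binary alphabets using Lemma~\ref{gen-lemma1}. Its hypothesis requires concavity of the map \eqref{cond:G}, which Lemma~\ref{corol:g} provides on $[0,\nu)$; the range issue of $G$ must be handled, for instance by truncation or by applying the reduction directly to the target inequality.
\item Perturb one of the distributions and differentiate twice. Using $g=x^2f''$, the second variation of the difference splits into a term $x^2g''$ and a cross term proportional to $g(\alpha)g(x/\alpha)$, coming from the product of the second derivatives of the two factor divergences.
\item Conclude that the sign condition defining $\mathcal{T}^-_\gamma$ gives the required convexity or concavity along the path, closing the argument as in the cited proof.
\end{enumerate}

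\textbf{Main obstacle.} The hard part is the third step: verifying that the $-\log(1-x)$ proof from \cite{abin2026class:arx} is genuinely scale-covariant, and that the domain condition $\gamma D_f<1$ it needs is available. If the original proof required $\mathsf{D_m}(f)\le 1$, then after scaling it requires $\gamma M\le 1$. I would try to derive this from $\gamma\ge\gamma_0$ together with nonnegativity of $D_1+D_2-\gamma D_1D_2$, or else from the condition in $\mathcal{T}^-_\gamma$ itself.
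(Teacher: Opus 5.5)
Your route is the same as the paper's. The comparison bound is Lemma~\ref{lem:pok:1}; your $\psi$-interpolation is the same argument. The scaled criterion is Lemma~\ref{lem:POI99}. The two are combined through monotonicity of $G$ and $\gamma\ge\gamma_0$.

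The step you leave open needs no re-derivation. $\gamma f$ is itself a convex $C^4$ function with $(\gamma f)(1)=0$ and $D_{\gamma f}=\gamma D_f$. Moreover, $x^2(\gamma f)''=\gamma g\in\mathcal{T}^{-}$ exactly when $g\in\mathcal{T}^{-}_{\gamma}$. So applying the cited $-\log(1-x)$ theorem to $\gamma f$ as a black box gives $1-\gamma D\ge(1-\gamma D_1)(1-\gamma D_2)$ immediately, and this is what the paper does. ``Scale-covariance'' is therefore automatic. Your steps 1--3 (binary reduction, second variation) should be dropped; as written they are too vague to carry the argument.

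The paper leaves the admissibility condition $\gamma\mathsf{D_m}(f)\le 1$ inside the citation. Your first suggestion for it cannot work, because $\gamma\ge\gamma_0$ is a lower bound on $\gamma$. Your second suggestion does work:
\begin{itemize}
\item Integrate the $\alpha=1$ instance $x^2g''(x)\le-\gamma g(1)g(x)$ against $w(x)=1/x$ on $(0,1)$ and $w(x)=1/x^2$ on $(1,\infty)$.
\item Use $\int_0^\infty w\,g\,dx=\mathsf{D_m}(f)$.
\item Use $\int_0^\infty w\,x^2g''\,dx=-g(1)$. The boundary terms $g(0^+)$ and $\lim_{x\to\infty}g'(x)$ vanish by concavity, $g\ge 0$, and $\mathsf{D_m}(f)<\infty$.
\end{itemize}
This gives $g(1)\bigl(\gamma\mathsf{D_m}(f)-1\bigr)\le 0$, and if $g(1)=0$ then $g\equiv 0$.
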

\begin{proof}
    The proof can be found in Appendix \ref{th:convex:G:p}.
\end{proof}
\begin{remark}
Theorem \ref{th:convex:G} suggests that, in order to find a suitable function \(f\), one should characterize the set \(\mathcal{T}_{\gamma}^{-}\). Suppose that \(\ell_{\gamma}(x)\in \mathcal{T}_{\gamma}^{-}\). Then the equation \(x^2 f_\gamma''(x)=\ell_{\gamma}(x)\) yields the following solution:
\begin{align}
    f_{\gamma}(x)=\widetilde{a}(x-1)+\int_{1}^{x}(x-t)\frac{\ell_{\gamma}(t)}{t^2}\,dt.
\end{align}
Note that the parameter \(\mathsf{D_m}(f)\) depends on \(\gamma\). After computing \(f_{\gamma}(x)\) using the formula above, one should calculate \(\mathsf{D_m}(f_{\gamma})\) and then choose the parameter \(\gamma\) such that
\[
\gamma\geq \frac{2\mathsf{D_m}(f_{\gamma})-G^{-1}\!\bigl(2G(\mathsf{D_m}(f_{\gamma}))\bigr)}{\mathsf{D_m}^{2}(f_{\gamma})}.
\]
The most effective approach for designing the function \(f\) is to consider a parametric family in \(\mathcal{T}_{\gamma}^{-}\) and then select the parameters so that all conditions are satisfied.
\end{remark}

\begin{remark} The set \(\mathcal{T}^{-}_{\gamma}\) for $\gamma=1$ was already introduced in \cite{abin2026class:isit} when studying the special case of \(G(x)=-\log(1-x)\).  
\end{remark}

\begin{example}\label{ex:convex:rt}For \(u(x)=1-\frac{x^2}{\nu^2}\) on \([0,\nu)\), we obtain \(G(x)=\frac{\nu}{2}\log\left(\frac{\nu+x}{\nu-x}\right)\).
As an example, set \(G(x)=\frac{1}{2}\log\left(\frac{1+x}{1-x}\right)\), which corresponds to $\nu=1$. Let \(0<\epsilon<\frac{1}{25}\). Define
\[
m_\epsilon=\sqrt{\frac{(1+\epsilon)(1-25\epsilon)}{75\epsilon^2+44\epsilon+1}},
\qquad
\gamma=\frac{2m_{\epsilon}}{m_{\epsilon}^2+1}.
\]
Then the non-negative and concave function
\[
\ell_{\epsilon}(x)=\frac{m_{\epsilon}\sqrt{x}}{4(1+\epsilon)}\Bigl(1+5\epsilon \cos(\log x)\Bigr)
\]
belongs to \(\mathcal{T}_{\gamma}^{-}\), and the corresponding function
\[
f_{\epsilon}(x)=m_{\epsilon}\left(1-\frac{\sqrt{x}\,\bigl(1+\epsilon\cos(\log x)\bigr)}{1+\epsilon}\right)
\]
is such that the pair \(\bigl(\tfrac{1}{2}\log\bigl(\tfrac{1+x}{1-x}\bigr),f_{\epsilon}\bigr)\) is divergence-subadditive.
\end{example}
For details of this example, please see Appendix~\ref{ex:convex:rt:p}.
\begin{remark}
It is well known that if \(G(x)\) is concave on \(x>0\) with \(G(0)=0\), then \(G(2x)\leq 2G(x)\); similarly, if \(G(x)\) is convex on \(x>0\) with \(G(0)=0\), then \(G(2x)\geq 2G(x)\). When \(G(x)\) satisfies \(G(0)=0\) but is neither convex nor concave, both inequalities may hold, though on different regions of \(x\).
\end{remark}
\begin{remark}
In Theorems \ref{th:concave:G} and \ref{th:convex:G}, the concavity and convexity assumptions on the function \(G\) are imposed only to guarantee that the parameter \(\gamma\) is positive; see the proofs of these theorems. Thus, Theorems \ref{th:concave:G} and \ref{th:convex:G} remain valid whenever one can find a positive \(\gamma\) satisfying the assumptions of these theorems. For other cases of the function \(G\), see the following example.
\end{remark}
\begin{example}\label{neither:conv:1}
For \(u(x)=4-(x-1)^2\) on \([0,3)\), we obtain
\[
G(x)=\frac{1}{4}\log\left(\frac{3(1+x)}{3-x}\right).
\]
The function \(G\) is neither convex nor concave.

Consider the setting of Example~\ref{ex:convex:rt}, but assume that
\[
G(x)=\frac{1}{4}\log\left(\frac{3(1+x)}{3-x}\right), \qquad x\in[0,3).
\]
All other parameters and functions (except \(G\)) are the same as in the setting of Example~\ref{ex:convex:rt}. We show in Appendix~\ref{neither:conv:1:p} that the pair
\[
\left(\frac{1}{4}\log\left(\frac{3(1+x)}{3-x}\right),f_{\epsilon}\right)
\]
is divergence-subadditive. Here, \(f_{\epsilon}\) is defined in Example~\ref{ex:convex:rt}.
\end{example}
\textbf{AI statement:} The authors have worked on the subadditivity problem for more than two years and have previously published research on this topic. Guided by their mathematical intuition, they used AI through a closely supervised, iterative process of prompting, evaluation, and refinement to develop the results presented here. All AI-generated content was carefully reviewed and verified by the authors.
\color{black}

\bibliographystyle{ieeetr}
\bibliography{biblioisit}
\appendix
\subsection{Proof of Theorem~\ref{thm:mutual-f-subadditivity}}
\label{app:mutual-f-subadditivity}

For fixed finite alphabets $\mathcal{X},\mathcal{Y}$, \(\widetilde{I}_f(X;Y)\) is a continuous function of $p_{XY}$. Therefore, if \(f\) is subadditive on all full-support distributions, then \(f\) is subadditive over all  distributions. 

We begin by computing \(\widetilde{I}_f(X;YZ)-\widetilde{I}_f(X;Z)\). For given \(p_{Z}\), \(p_{X|Z}\), and \(p_{Y|X}\), let \(p_{X,Y,Z}=p_{Z}\,p_{X|Z}\,p_{Y|X}\). We have
\begin{align}
    \widetilde{I}_f\left(X;YZ\right)-\widetilde{I}_f(X;Z)=\sum_{z\in \mathcal{Z}}p_{Z}(z)\,\Theta_{p_X,p_{Y|X}}\left(p_{X|Z=z}\right),
\end{align}
where, for arbitrary $q_X, q_{Y|X}$, and $r_X$, \(\Theta_{q_X,q_{Y|X}}\left(r_{X}\right)\) is defined as
\begin{align}
&\Theta_{q_X,q_{Y|X}}\left(r_{X}\right)\triangleq\nonumber
\\&\sum_{y\in\mathcal{Y}}r_{Y}(y)\left[D_{f}\left(r_{X|y}\,\Big\|\,q_X\right)\right]-D_{f}\bigl(r_{X}\|q_X\bigr),
\end{align}
and $r_{YX}=r_Xq_{Y|X}$.
Lemma \ref{lemma4} shows that for fixed \(q_X\) and \(q_{Y|X}\), the function \(\Theta_{q_X,q_{Y|X}}\left(r_X\right)\) is a concave function of \(r_{X}\). Thus, Jensen's inequality implies
\begin{align}
    &\sum_{z\in \mathcal{Z}}p_{Z}(z)\,\Theta_{p_X,p_{Y|X}}\left(p_{X|Z=z}\right)
    \\&\leq \Theta_{p_X,p_{Y|X}}\left(\sum_{z\in \mathcal{Z}}p_{Z}(z)p_{X|Z=z}\right)
    \\&=\Theta_{p_X,p_{Y|X}}\left(p_{X}\right)
    \\&=\tilde{I}_f(X;Y).
\end{align}
This completes the proof. 
\begin{lemma}\label{yhn:1}
Under the assumptions of Theorem~\ref{thm:mutual-f-subadditivity},
for any fixed  $p_{X}$ and $p_{Y|X}$, the map \(\Theta_{p_X,p_{Y|X}}\left(r_X\right)\) is a concave function of \(r_{X}\), 
where \(\Theta_{p_X,p_{Y|X}}\left(r_{X}\right)\) is defined as
\begin{align}
&\Theta_{p_X,p_{Y|X}}\left(r_{X}\right)\triangleq\nonumber
\\&\sum_{y\in\mathcal{Y}}r_{Y}(y)\left[D_{f}\left(r_{X|y}\,\Big\|\,p_X\right)\right]-D_{f}\bigl(r_{X}\|p_X\bigr)
\end{align}
and $r_{YX}=r_Xp_{Y|X}$.\label{lemma4}
\end{lemma}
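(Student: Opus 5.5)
The plan is to check concavity of \(\Theta_{p_X,p_{Y|X}}\) along lines: take \(r_X\) of full support (continuity handles the boundary), a direction \(h\) with \(\sum_x h(x)=0\), and show \(\frac{d^2}{dt^2}\Theta(r_X+th)\big|_{t=0}\le 0\). Write \(W=p_{Y|X}\), \(a(x)=r_X(x)/p_X(x)\), and \(u(x)=h(x)/r_X(x)\). Then \(\mathbb{E}_{r_X}[u]=0\) and \(r_Y(y)=\sum_x r_X(x)W(y|x)\). The argument has two steps. First, reduce the second derivative of \(\Theta\) to a statement about the generalized variance functional
\[
V(q)\triangleq\sum_x p_X(x)\,g\!\left(\frac{q(x)}{p_X(x)}\right)\bigl(u(x)-\mathbb{E}_q[u]\bigr)^2 .
\]
Second, prove that \(V\) is concave in \(q\) under the hypothesis of Theorem~\ref{thm:mutual-f-subadditivity}.

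\emph{Step 1 (reduction).} The second derivative of \(D_f(r_X\|p_X)\) along \(h\) is
\[
\sum_x p_X f''(a)(h/p_X)^2=\sum_x p_X\,g(a)\,u^2=V(r_X),
\]
using \(g=a^2f''\) and \(\mathbb{E}_{r_X}[u]=0\). For each \(y\) the term
\[
r_Y(y)D_f(r_{X|y}\|p_X)=\sum_x p_X(x)\,\lambda\, f\!\left(\frac{v(x)}{\lambda p_X(x)}\right)
\]
is a perspective, with \(v(x)=r_X(x)W(y|x)\) and \(\lambda=r_Y(y)\), both linear in \(r_X\). The second derivative of \(\lambda f(v/\lambda)\) is \(f''(v/\lambda)(\delta v-(v/\lambda)\delta\lambda)^2/\lambda\). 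Set \(c_y=aW/r_Y\) and \(\delta\lambda_y=\sum_x hW\). A short computation then gives
\[
\frac{d^2}{dt^2}\Bigl[r_Y(y)D_f(r_{X|y}\|p_X)\Bigr]=r_Y(y)\sum_x p_X\,g(c_y)\Bigl(u-\tfrac{\delta\lambda_y}{r_Y(y)}\Bigr)^2 .
\]
Here \(c_y=r_{X|y}/p_X\) and \(\delta\lambda_y/r_Y(y)=\mathbb{E}_{r_{X|y}}[u]\), so this term equals \(r_Y(y)V(r_{X|y})\). Hence
\[
\frac{d^2}{dt^2}\Theta=\sum_y r_Y(y)V(r_{X|y})-V\Bigl(\sum_y r_Y(y)r_{X|y}\Bigr),
\]
which is \(\le 0\) by Jensen as soon as \(V\) is concave.

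\emph{Step 2 (concavity of \(V\); the main obstacle).} Differentiate \(V(q+td)\) twice, with \(\sum_x d=0\), \(w=d/p_X\), \(a=q/p_X\), \(m=\mathbb{E}_q[u]\), and \(m'=\sum_x d\,u=\sum_x p_X w(u-m)\). This gives
\[
V''=\sum_x p_X\Bigl[g''(a)w^2(u-m)^2-4g'(a)w(u-m)m'+2g(a)m'^2\Bigr].
\]
The cross term is not pointwise controllable, so we split it as \(g'=(g'-\eta_g)+\eta_g\). The \(\eta_g\) part sums exactly to \(-4\eta_g m'^2\). For the remainder, \(g''<0\) allows completing the square in \(w\) pointwise:
\[
g''w^2(u-m)^2-4(g'-\eta_g)w(u-m)m'\le \frac{4(g'-\eta_g)^2}{-g''}\,m'^2 .
\]
Therefore
\[
V''\le m'^2\Bigl[\sum_x p_X\Bigl(\tfrac{4(g'(a)-\eta_g)^2}{-g''(a)}+2g(a)\Bigr)-4\eta_g\Bigr].
\]
By the hypothesis multiplied by \(2\), the bracketed sum is at most \(\sum_x p_X\bigl(4\eta_g+2(a-1)\ell_g\bigr)-4\eta_g\). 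This equals \(0\), since \(\sum_x p_X a=1\); this identity is exactly why the free affine term \((x-1)\ell_g\) is allowed. So \(V''\le 0\), \(V\) is concave, and Lemma~\ref{lemma4} follows.

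The delicate points are that the \(\eta_g\)-splitting must exactly cancel against \(m'^2\), which is what ties \(\eta_g\) to the factor \(2\eta_g\) in the hypothesis, and that the \(C^4\)/full-support assumptions justify the differentiation.
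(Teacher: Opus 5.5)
Your proposal is correct and follows essentially the same route as the paper. You reduce the second derivative of $\Theta$ to $\sum_y r_Y(y)V(r_{X|y})-V(r_X)$ for the generalized variance functional (the paper's $\mathcal{L}_{\beta_X,p_X}$), and then prove concavity of $V$ by splitting $g'=(g'-\eta_g)+\eta_g$, completing the square with $g''<0$, and cancelling the affine term via $\mathbb{E}_{p_X}[q/p_X]=1$; your perspective-function computation in Step~1 simply makes explicit the ``routine calculations'' the paper omits.
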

\begin{proof}
Since \(r_X\) is a probability distribution, it suffices to show that for every function \(v_X\) satisfying \(\sum_{x}v_X(x)=0\), the following quantity is non-positive:
\begin{align}
    &\frac{d^2}{d\epsilon^2}\Theta_{p_X,p_{Y|X}}\left(r_X+\epsilon v_X\right)\Big|_{\epsilon=0}
\nonumber\\&=\vec{v}_{X}^{T}\,\mathcal{H}\!\left(\Theta_{p_X,p_{Y|X}}\left(r_X\right)\right)\vec{v}_{X}.
\end{align}
Here, \(\mathcal{H}\!\left(\Theta_{p_X,p_{Y|X}}\left(r_X\right)\right)\) denotes the Hessian matrix of \(\Theta_{p_X,p_{Y|X}}\left(r_X\right)\), and \(\vec{v}_{X}\) is the column vector corresponding to the function \(v_X\), while \(T\) denotes the transpose operator. Let \(g(x)\triangleq x^2f''(x)\)
and set
$$\beta_X(x)=\frac{v_X(x)}{r_X(x)}, \qquad\forall x.$$
Then,  routine calculations show that
\begin{align}
     &\frac{d^2}{d\epsilon^2}\Theta_{p_X,p_{Y|X}}\left(r_X+\epsilon v_X\right)\Big|_{\epsilon=0} \nonumber\\&=
     \sum_{y\in\mathcal{Y}}r(y)\Bigg[\sum_{x}p(x)\,g\!\left(\frac{r(x|y)}{p(x)} \right)\nonumber\\&\quad\times\left(\beta(x)-\sum_{x'\in \mathcal{X}}r(x'|y)\,\beta(x')\right)^2\Bigg]\nonumber\\&-\sum_{x\in\mathcal{X}}p(x)\,g\!\left(\frac{r(x)}{p(x)}\right)\beta^{2}(x).
\end{align}
For every real-valued function \(u_X\) and all distributions \(p_X\) and $q_X$, define the function
\begin{align}
    \mathcal{L}_{u_X,p_X}(q_X)\triangleq \mathbb{E}_{p_X}\!\left[g\!\left(\frac{q_X}{p_X}\right)\left(u_X-\mathbb{E}_{q_X}(u_X)\right)^2\right].
\end{align}
Then, one can observe that
\begin{align}
     &\frac{d^2}{d\epsilon^2}\Theta_{p_X,p_{Y|X}}\left(r_X+\epsilon v_X\right)\Big|_{\epsilon=0} \nonumber\\&=
 \sum_{y\in\mathcal{Y}}r(y)\mathcal{L}_{\beta_X,p_X}(r_{X|y}) \\&-\mathcal{L}_{\beta_X,p_X}(r_{X}).
\end{align}
Thus, it suffices to show that $\mathcal{L}_{u_X,p_X}(q_X)$ is concave in  $q_X$ for any $u_X$ and $p_X$. 

 As before, consider an arbitrary real-valued function \(k_X\) such that \(\sum_{x\in \mathcal{X}}k_X(x)=0\). To show that \(\mathcal{L}_{u_X,p_X}(q_X)\) is concave over probability distributions, it suffices to show that
\begin{align}
        &\frac{d^2}{d\epsilon^2}\mathcal{L}_{u_X,p_X}(q_X+\epsilon k_X)\Big|_{\epsilon=0} \leq 0
\end{align}
for all distributions \(p_X\) and \(q_X\) and all functions \(u_X\) and \(k_X\) satisfying \(\sum_{x\in\mathcal{X}}k_X(x)=0\). Routine calculations show that
\begin{align}
    &\frac{d^2}{d\epsilon^2}\mathcal{L}_{u_X,p_X}(q_X+\epsilon k_X)\Big|_{\epsilon=0}=\label{diff;2}\\
    &\mathbb{E}_{p_X}\!\left[\widetilde{\omega}_{X}^2\,g''\!\left(\frac{q_X}{p_X}\right)-4\widetilde{\omega}_{X}\widetilde{\gamma}\,g'\!\left(\frac{q_X}{p_X}\right)+2\widetilde{\gamma}^{2}g\!\left(\frac{q_X}{p_X}\right)\right],\nonumber
\end{align}
where
\begin{align}
    \widetilde{\omega}_{X}&=\frac{\left(u_X-\mathbb{E}_{q}\left(u_X\right)\right)k_X}{p_X},\\
    \widetilde{\gamma}&=\mathbb{E}_{p_X}\left(\widetilde{\omega}_{X}\right).\label{qw:1}
\end{align}

Now, we wish to design a function \(g(x)\) such that the expression in \eqref{diff;2} is non-positive. Let \(\eta_g\) be a constant. Using \(g'=(g'-\eta_g)+\eta_g\) together with \eqref{qw:1}, we obtain
\begin{align}
    &\frac{d^2}{d\epsilon^2}\mathcal{L}_{u_X,p_X}(q_X+\epsilon k_X)\Big|_{\epsilon=0}=
    \mathbb{E}_{p_X}\!\left[\widetilde{\omega}_{X}^2\,g''\!\left(\frac{q_X}{p_X}\right)\right.\nonumber\\&\left.-4\widetilde{\omega}_{X}\widetilde{\gamma}\left(g'\!\left(\frac{q_X}{p_X}\right)-\eta_g\right)+2\widetilde{\gamma}^{2}g\!\left(\frac{q_X}{p_X}\right)\right]-4\widetilde{\gamma}^2\eta_g.\label{diff;3}
\end{align}
Assuming \(g''(x)\neq 0\), rearranging \eqref{diff;3} yields
\begin{align}
    &\frac{d^2}{d\epsilon^2}\mathcal{L}_{u_X,p_X}(q_X+\epsilon k_X)\Big|_{\epsilon=0}=
    \mathtt{A}_1+2\widetilde{\gamma}^2\mathtt{A}_{2}-4\widetilde{\gamma}^2\eta_g,
\end{align}
where
\begin{align}
\mathtt{A}_1&=\mathbb{E}_{p_X}\!\left[\frac{\left(\widetilde{\omega}_{X}\,g''\!\left(\frac{q_X}{p_X}\right)-2\widetilde{\gamma}\left(g'\!\left(\frac{q_X}{p_X}\right)-\eta_g\right)\right)^2}{g''\!\left(\frac{q_X}{p_X}\right)}\right],\\
 \mathtt{A}_2&=\mathbb{E}_{p_X}\!\left[g\!\left(\frac{q_X}{p_X}\right)+\frac{2\left(g'\!\left(\frac{q_X}{p_X}\right)-\eta_g\right)^2}{-g''\!\left(\frac{q_X}{p_X}\right)}\right].
\end{align}
It is clear that if \(g''(x)<0\) for all \(x>0\), then \(\mathtt{A}_1\leq 0\). Moreover, if we assume that for all \(x>0\) the function \(g(x)\) satisfies
\begin{align}
    g(x)+\frac{2\left(g'(x)-\eta_g\right)^2}{-g''(x)}\leq \theta_1+\theta_2(x-1)\label{iu:1}
\end{align}
for some constants \(\theta_1\) and \(\theta_2\), then \(\mathtt{A}_2\leq \theta_1\). Thus, we have
\begin{align}
    \frac{d^2}{d\epsilon^2}\mathcal{L}(q_X+\epsilon k_X)\Big|_{\epsilon=0}&=
\mathtt{A}_1+2\widetilde{\gamma}^2\mathtt{A}_{2}-4\widetilde{\gamma}^2\eta_g,\nonumber\\
    &\leq 2\widetilde{\gamma}^2(\theta_1-2\eta_g).
\end{align}
By choosing \(\theta_1= 2\eta_g\), we obtain \(\frac{d^2}{d\epsilon^2}\mathcal{L}(q_X+\epsilon k_X)\big|_{\epsilon=0}\leq 0\). 
Recall that \(f\) is a convex function. Since \(x^2f''(x)=g(x)\) and we assume that \(g''(x)<0\), the left-hand side of \eqref{iu:1} is a positive function, and thus \(\mathtt{A}_{2}\) is positive. Since \(\mathtt{A}_{2}\leq \theta_1\), we conclude that \(\eta_g\) must be positive.
 This completes the proof of Lemma~\ref{yhn:1}.
\end{proof}




\subsection{Proof of Corollary~\ref{coro:1}}\label{coro:1:proof}
By assumption, the function \(s(x)\) is positive and concave and \(s''(x)<0\). Moreover, 
\[
\mathtt{K}_{s}(x)\triangleq s(x)+\frac{2\,(s'(x))^2}{-s''(x)}
\]
is a concave function. The first-order inequality for the concave function \(\mathtt{K}_{s}(x)\) yields 
\begin{align*}
   \mathtt{K}_{s}(x)\leq \mathtt{K}_{s}(1)+\mathtt{K}_{s}^{'}(1)(x-1).
\end{align*}
Define \(g(x)=\alpha x+s(x)\) with \(\alpha>0\). For \(\alpha\geq \max\{\mathtt{K}_{s}(1),0\}\), the function \(g(x)\) satisfies the assumptions of Theorem \ref{thm:mutual-f-subadditivity} by choosing \((\eta_g,\ell_g)=\bigl(\alpha,\mathtt{K}_{s}^{'}(1)+\alpha\bigr)\), since
\begin{align*}
    &g(x)+\frac{2\,(g'(x)-\eta_g)^2}{-g''(x)}=\eta_g x+\mathtt{K}_{s}(x)\\&\leq 
    \mathtt{K}_{s}(1)+\eta_g+\ell_g(x-1)\leq 2\eta_g+\ell_{g}(x-1).
\end{align*}
Moreover, \(g''(x)=s''(x)<0\). Also, by choosing \(\alpha=\eta_g>0\), the function \(g(x)\) is positive. This completes the proof.

\subsection{Details of Example~\ref{ex;1}}\label{ex;1:p}
\begin{enumerate}
    \item \textbf{First Case: } The positive concave function \(s(x)\) is given by \(s(x)=b\,x^\gamma\) for \(\gamma\in(0,1)\) and \(b>0\). The corresponding \(\mathtt{K}_{s}(x)\) is
\[
\mathtt{K}_{s}(x)=b\,\frac{1+\gamma}{1-\gamma}\,x^\gamma.
\]
Let \(\alpha\geq \mathtt{K}_{s}(1)=b\,\frac{1+\gamma}{1-\gamma}\). Solving \(x^2f''(x)=\alpha x+s(x)\) with \(f(1)=0\) yields
\[
f(x)=\alpha x\log x+\frac{b}{\gamma(\gamma-1)}\left(x^{\gamma}-1\right)+\widetilde{a}(x-1),
\]
where \(\widetilde{a}\) is an arbitrary parameter. Choosing \((\alpha,\gamma,b,\widetilde{a})=\left(1,\frac{1}{2},\frac{1}{3},\frac{2}{3}\right)\) gives
\[
f(x)=x\log x+\frac{2}{3}\left(\sqrt{x}-1\right)^2.
\]
Since $x\log(x)$ leads to a subadditive information measure, by mixing the above function with $x\log(x)$, we get that
    $f(x)=x\log(x)+\frac{2\alpha}{3}(\sqrt{x}-1)^{2}$ also leads to a subadditive information measure for all $\alpha\in[0,1]$.
\item \textbf{Second Case: } 
Assume that a parametric family \(f_{\gamma}(x)\) leads to a subadditive information measure for any \(\gamma\) in an interval \(\widehat{J}\). Let \(\mu\) be a positive finite measure on \(\widehat{J}\). Then, the mixture
\[
f(x)=\int_{\widehat{J}}f_{\gamma}(x)\,d\mu(\gamma)
\]
 also leads to a subadditive information measure. 
Now, in the previous example, choose \((\alpha,b,\widetilde{a})=\bigl(\gamma(1+\gamma),\,\gamma(1-\gamma),\,0\bigr)\) and let \(\mu\) be the uniform distribution on \((0,1)\). Then, we obtain the function
\begin{align*}
    f(x)&=\int_{0}^{1}\Bigl[\gamma(\gamma+1)\,x\log x+(1-x^{\gamma})\Bigr]\,d\gamma\\&=\frac{5}{6}\,x\log x+1-\frac{x-1}{\log x},
\end{align*}
which is subadditive.
\item \textbf{Third Case: } Consider the positive concave function $s(x)=b\frac{x}{x+1}$ for positive $b$. The corresponding \(\mathtt{K}_{s}(x)\) is
$
\mathtt{K}_{s}(x)=b
$.
Let \(\alpha\geq \mathtt{K}_{s}(1)=b\). Solving \(x^2f''(x)=\alpha x+s(x)\) with \(f(1)=0\) yields
\[
f(x)=(\alpha+b) x\log x-b (x+1)\log\left( \frac{x+1}{2}\right)
+\widetilde{a}(x-1),
\]
where \(\widetilde{a}\) is an arbitrary parameter. Choosing \((\alpha,b,\widetilde{a})=\left(1,1,0\right)\) gives
\[
f(x)=2x\log x- (x+1)\log\left( \frac{x+1}{2}\right).
\]
\item \textbf{Fourth Case: } Consider the positive concave function $s(x)=b\log(1+x)$ for positive $b$. The corresponding \(\mathtt{K}_{s}(x)\) is
$
\mathtt{K}_{s}(x)=b(\log(1+x)+2)
$.
Let \(\alpha\geq \mathtt{K}_{s}(1)=b(2+\log(2))\). Using the formula in equation \eqref{int:1:1}, we get 
\[
f(x)=\alpha x\log(x)+b \int_{1}^{x}(x-t)\frac{\log(1+t)}{t^2} dt.
\]
Choosing \((\alpha,b)=\left(1,\frac{1}{4}\right)\) gives
\[
f(x)= x\log(x)+\frac{1}{4} \int_{1}^{x}(x-t)\frac{\log(1+t)}{t^2} dt.
\]
\end{enumerate}
\subsection{Proof of Theorem~\ref{th:concave:G}}\label{th:concave:G:p}

Since \(G(x)\) is obtained from Lemma~\ref{corol:g}, we deduce that the function \(G^{-1}\!\bigl(G(a)+G(b)\bigr)\) is concave in each of \(a\) and \(b\) separately; see Lemma~\ref{gen-lemma1} and Lemma~\ref{corol:g}. We have the following lemma.

\begin{lemma}\label{lem:pok:1}
  Let \(G\) be an increasing function with \(G(0)=0\). Assume that \(a,b\in[0,m]\) and that \(G^{-1}\!\bigl(G(a)+G(b)\bigr)\) is concave in each of \(a\) and \(b\) separately. Then
   \begin{align}
G^{-1}\!\left(G(a)+G(b)\right)&\geq a+b\nonumber\\&+\frac{G^{-1}\!\left(2G(m)\right)-2m}{m^2}ab.\label{er:44}
   \end{align}
   
\end{lemma}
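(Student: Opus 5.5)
Set $\Phi(a,b)=G^{-1}(G(a)+G(b))$ on $[0,m]^2$ and $c=\frac{\Phi(m,m)-2m}{m^2}$. Since $G(0)=0$, the boundary values are $\Phi(a,0)=a$ and $\Phi(0,b)=b$. The plan is to compare $\Phi$ with the bilinear function $B(a,b)=a+b+cab$. This $B$ is the unique function that is affine in each variable separately and matches $\Phi$ at the four corners $(0,0)$, $(m,0)$, $(0,m)$, $(m,m)$. Separate concavity of $\Phi$ should then give $\Phi\geq B$ by two successive chord inequalities.

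\emph{Step 1 (fix $b$, vary $a$).} For fixed $b\in[0,m]$, the map $a\mapsto\Phi(a,b)$ is concave on $[0,m]$. It therefore lies above its chord between $a=0$ and $a=m$. Writing $a=(1-\frac{a}{m})\cdot 0+\frac{a}{m}\cdot m$, this gives
\[
\Phi(a,b)\geq \Bigl(1-\frac{a}{m}\Bigr)\Phi(0,b)+\frac{a}{m}\Phi(m,b)=\Bigl(1-\frac{a}{m}\Bigr)b+\frac{a}{m}\Phi(m,b).
\]

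\emph{Step 2 (vary $b$ in the last term).} The map $b\mapsto\Phi(m,b)$ is concave on $[0,m]$, with $\Phi(m,0)=m$ and value $\Phi(m,m)=G^{-1}(2G(m))$ at $b=m$. The chord inequality gives
\[
\Phi(m,b)\geq \Bigl(1-\frac{b}{m}\Bigr)m+\frac{b}{m}G^{-1}(2G(m))=m+b\cdot\frac{G^{-1}(2G(m))-m}{m}.
\]
The coefficient $a/m$ in Step 1 is nonnegative, so substituting this bound preserves the inequality:
\[
\Phi(a,b)\geq b-\frac{ab}{m}+a+\frac{ab}{m^2}\bigl(G^{-1}(2G(m))-m\bigr)=a+b+\frac{G^{-1}(2G(m))-2m}{m^2}\,ab,
\]
which is \eqref{er:44}.

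\emph{Main point to check.} The argument needs $\Phi(m,b)$ and $\Phi(m,m)$ to be well defined, i.e.\ $G(a)+G(b)$ must lie in the range of $G$. Under Lemma~\ref{gen-lemma1} and Lemma~\ref{corol:g} the range of $G$ is $[0,\infty)$, so this holds. One also needs concavity on the closed interval $[0,m]$, including at the endpoints; if it is only known on the interior, continuity of $\Phi$ extends the chord bounds to the endpoints. The sign of $c$ plays no role in this lemma: it only becomes relevant later, when $c$ is used as $\gamma$ in Theorem~\ref{th:concave:G} and Theorem~\ref{th:convex:G}.
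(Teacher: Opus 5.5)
Your proof is correct and is essentially the paper's argument: two successive chord inequalities from separate concavity, which interpolate $G^{-1}(G(a)+G(b))$ from its values at the four corners of $[0,m]^2$. The only cosmetic difference is that you substitute $\Phi(0,b)=b$ directly, whereas the paper also expands that term by a chord in $b$; this gives the same bound because that chord is exact.
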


\begin{proof}
Let \(\lambda=\frac{a}{m}\in[0,1]\) and \(\eta=\frac{b}{m}\in[0,1]\). We write \(a=\lambda m+0\cdot(1-\lambda)\) and \(b=\eta m+0\cdot(1-\eta)\). Concavity of
\(G^{-1}\!\bigl(G(a)+G(b)\bigr)\) with respect to each of its arguments yields
\begin{align*}
    G^{-1}\!\bigl(G(a)+G(b)\bigr)&\geq \lambda\, G^{-1}\!\bigl(G(m)+G(b)\bigr)\\&\quad+(1-\lambda)
    G^{-1}\!\bigl(G(0)+G(b)\bigr)\nonumber\\
    &\geq \lambda\eta\, G^{-1}\!\bigl(G(m)+G(m)\bigr)\nonumber\\
    &\quad+\lambda(1-\eta)\,G^{-1}\!\bigl(G(m)+G(0)\bigr)\nonumber\\
    &\quad+(1-\lambda)\eta\,G^{-1}\!\bigl(G(0)+G(m)\bigr)\nonumber\\
    &\quad+(1-\lambda)(1-\eta)\,G^{-1}\!\bigl(G(0)+G(0)\bigr).
\end{align*}
Using \(G(0)=0\), \(G^{-1}(G(m))=m\), \(\lambda=\frac{a}{m}\), and \(\eta=\frac{b}{m}\) in the above yields
\begin{align*}
    G^{-1}\!\bigl(G(a)+G(b)\bigr)
    &\geq \frac{ab}{m^2}\, G^{-1}\!\bigl(2G(m)\bigr)+m\times\nonumber\\
    &\quad+\left(\frac{a}{m}\Bigl(1-\frac{b}{m}\Bigr)+\frac{b}{m}\Bigl(1-\frac{a}{m}\Bigr)\right)\nonumber\\
    &=a+b+\frac{G^{-1}\!\bigl(2G(m)\bigr)-2m}{m^2}\,ab.
\end{align*}
\end{proof}
\begin{lemma}\label{lem:POI9}
    If \(x^2f''(x)\in \mathcal{T}_{\gamma}\), then
    \begin{align}
   c\leq a+b+\gamma ab,
\end{align}
where
\begin{align*}
  &c\triangleq D_{f}(q_Y q_Z \,\|\, r_Y r_Z),~
  a\triangleq D_{f}(q_Y \,\|\, r_Y),~\\
  &b\triangleq D_{f}(q_Z \,\|\, r_Z), \label{gen-def-1:2:y}
\end{align*}
for any distributions \(q_Y\ll r_Y\) and \(q_Z\ll r_Z\) on finite sets \(\mathcal{Y}\) and \(\mathcal{Z}\), respectively.
\end{lemma}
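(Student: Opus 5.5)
We must show \(c\le a+b+\gamma ab\) whenever \(x^2f''(x)=g(x)\in\mathcal{T}_\gamma\). Following the strategy of \cite{abin2026class:isit} for \(G(x)=\log(1+x)\), which is the case \(\gamma=1\) since \(1+c\le(1+a)(1+b)\) is exactly \(c\le a+b+ab\), we reduce the inequality to a concavity statement.

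\textbf{Reduction to a concavity statement.} Fix \(q_Z,r_Z\), and set \(\lambda_z=q_Z(z)/r_Z(z)\), so that
\[
D_f(q_Yq_Z\|r_Yr_Z)=\sum_z r_Z(z)\,D_{f_{\lambda_z}}(q_Y\|r_Y),
\qquad f_\lambda(x)=f(\lambda x).
\]
Consider the functional
\[
\Phi(q_Y)=\sum_z r_Z(z)\,D_{f_{\lambda_z}}(q_Y\|r_Y)-(1+\gamma b)\,D_f(q_Y\|r_Y),
\]
viewed as a function of \(q_Y\) for fixed \(r_Y\). The target inequality is \(\Phi(q_Y)\le b\). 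At \(q_Y=r_Y\) we have \(\Phi(r_Y)=\sum_z r_Z(z)f(\lambda_z)=b\). So it suffices to show two things:
\begin{enumerate}
\item \(\Phi\) is concave in \(q_Y\);
\item \(q_Y=r_Y\) is a stationary point of \(\Phi\) along directions \(v\) with \(\sum_y v(y)=0\).
\end{enumerate}
Stationarity holds because the first derivative at \(q_Y=r_Y\) is \(\sum_y v(y)\cdot(\text{const})=0\): the derivative of \(D_h(q\|r)\) at \(q=r\) is \(h'(1)\sum_y v(y)\).

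\textbf{Second-derivative computation.} Differentiating twice in the direction \(v\) gives
\[
\sum_y \frac{v(y)^2}{r(y)}\Bigl[\sum_z r_Z(z)\,\lambda_z^2 f''(\lambda_z t_y)-(1+\gamma b)f''(t_y)\Bigr],
\qquad t_y=\frac{q(y)}{r(y)}.
\]
It therefore suffices to prove the pointwise inequality
\[
\sum_z r_Z(z)\,\lambda_z^2 f''(\lambda_z t)\le(1+\gamma b)\,f''(t)\qquad\forall t>0.
\]
Multiplying by \(t^2\) turns this into
\[
\mathbb{E}_{r_Z}\bigl[g(\lambda t)\bigr]\le(1+\gamma b)\,g(t).
\]
Here \(b=\mathbb{E}_{r_Z}[f(\lambda)]\). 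Equivalently, \(\mathbb{E}_{r_Z}[g(\lambda t)]-g(t)\le\gamma\, g(t)\,\mathbb{E}_{r_Z}[f(\lambda)]\).

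\textbf{The key step: a one-variable inequality.} I would prove the stronger pointwise statement that for every fixed \(t>0\),
\[
\psi(\lambda)\triangleq g(\lambda t)-g(t)-\gamma\, g(t)\,f(\lambda)\le 0\qquad\forall\lambda>0,
\]
and then average over \(\lambda\sim r_Z\). To do this:
\begin{itemize}
\item Note \(\psi(1)=0\), since \(f(1)=0\).
\item Add an affine term \(c(\lambda-1)\) to \(f\) so that \(f'(1)=0\). This is allowed because \(\mathbb{E}_{r_Z}[\lambda-1]=0\), so \(b\) is unchanged. It may also require restricting to \(f'(1)\) chosen so that \(\psi'(1)=t g'(t)-\gamma g(t)f'(1)=0\), i.e. \(f'(1)=tg'(t)/(\gamma g(t))\), which the affine freedom permits separately for each \(t\).
\item Then it suffices that \(\psi\) is concave in \(\lambda\):
\[
\psi''(\lambda)=t^2g''(\lambda t)-\gamma\, g(t)\,f''(\lambda)=\frac{1}{\lambda^2}\Bigl[x^2g''(x)-\gamma\, g(t)\,g(x/t)\Bigr],\qquad x=\lambda t.
\]
This is exactly the defining condition of \(\mathcal{T}_\gamma\) with \(\alpha=t\), so \(\psi''\le 0\).
\end{itemize}
A concave function with \(\psi(1)=0\) and \(\psi'(1)=0\) is nonpositive, which completes the argument.

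\textbf{Main obstacle.} The delicate point is the affine adjustment: the choice of \(f'(1)\) depends on \(t\), while \(b\) must stay invariant. I expect this to be fine because \(D_f\) is invariant under \(f\mapsto f+c(x-1)\), and \(\psi\) is only used at each fixed \(t\) separately. I would also double-check boundary behaviour (full support, then continuity) exactly as in the proof of Theorem~\ref{thm:mutual-f-subadditivity}.
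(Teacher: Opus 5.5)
Your argument is correct, and it takes a genuinely different route from the paper's. The paper does not prove the inequality directly. It invokes Theorem~3 of \cite{abin2026class:isit}: if \(x^2f''\in\mathcal{T}^{+}=\mathcal{T}_1\), then \((\log(1+x),f)\) is divergence-subadditive. It applies that theorem to \(\gamma f\), using that \(x^2(\gamma f)''=\gamma g\in\mathcal{T}_1\) if and only if \(g\in\mathcal{T}_\gamma\), and then unwinds \(\log(1+\gamma c)\le\log(1+\gamma a)+\log(1+\gamma b)\).

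You instead prove the tensorization inequality from scratch on arbitrary finite alphabets:
\begin{itemize}
\item you fix \((q_Z,r_Z)\) and show that \(\Phi(q_Y)=c-(1+\gamma b)a\) is concave in \(q_Y\) with a stationary point at \(r_Y\);
\item you reduce concavity to the scalar inequality \(\mathbb{E}_{r_Z}[g(\lambda t)]\le(1+\gamma b)g(t)\);
\item you obtain that inequality from concavity of \(\psi_t\), which is exactly the \(\mathcal{T}_\gamma\) condition with \(\alpha=t\).
\end{itemize}
The paper's proof is two lines but entirely outsourced, including whatever binary-alphabet reduction the cited theorem relies on. Yours is self-contained and needs no binary reduction. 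It in effect re-proves the cited theorem (the case \(\gamma=1\)). It also explains the structure of \(\mathcal{T}_\gamma\): \(\alpha\) is the likelihood ratio on the \(Y\)-side and \(x/\alpha\) the one on the \(Z\)-side. This is the same Hessian/concavity technique the paper uses for Theorem~\ref{thm:mutual-f-subadditivity}.

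A few points to tidy up.
\begin{itemize}
\item The sentence about shifting \(f\) so that \(f'(1)=0\) is superfluous and conflicts with the next one. What you actually use is, for each fixed \(t\), \(\tilde f_t(\lambda)=f(\lambda)+c_t(\lambda-1)\) with \(c_t=\frac{tg'(t)}{\gamma g(t)}-f'(1)\). This is legitimate because \(\mathbb{E}_{r_Z}[\lambda]=\sum_z q_Z(z)=1\) by \(q_Z\ll r_Z\). Since the scalar inequality is needed only one \(t\) at a time, your ``main obstacle'' is not an obstacle.
\item If \(g(t)=0\) for some \(t>0\), then \(c_t\) is undefined. In that case, taking \(\alpha=t\) in \(\mathcal{T}_\gamma\) gives \(g''\le 0\). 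So \(g\) is a nonnegative concave function with an interior zero, hence \(g\equiv 0\). Then \(f\) is affine and \(a=b=c=0\).
\item A \(z\) with \(\lambda_z=0\) contributes zero to the Hessian. The required bound \(-g(t)-\gamma g(t)\tilde f_t(0)\le 0\) follows from \(\psi_t\le 0\) on \((0,\infty)\), \(g\ge 0\), and letting \(\lambda\downarrow 0\) (continuity of \(f\) at \(0\)). This is the same as your full-support-then-continuity remark.
\end{itemize}
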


\begin{proof}
    According to Theorem~3 in \cite{abin2026class:isit}, if the function \(x^2f''\) belongs to the set \(\mathcal{T}^{+}\) defined as
    \begin{align}
    \mathcal{T}^{+}=\Bigl\{g:[0,\infty)\rightarrow [0,\infty)\;\Big|\;&x^2g''(x)-\, g(\alpha)\,g\!\left(\frac{x}{\alpha}\right)\leq 0,\nonumber\\& \forall x>0,~\forall\alpha>0\Bigr\},
\end{align}
then the pair \((\log(1+x),f)\) is divergence-subadditive. Let \(\gamma\) be a positive constant. Since
\[
\mathcal{D}_{\log(1+\gamma x),f}(\cdot\|\cdot)=\mathcal{D}_{\log(1+x),\gamma f}(\cdot\|\cdot),
\]
if \(x^2(\gamma f)''=\gamma\, x^2 f''\in \mathcal{T}^{+}\), then the pair \((\log(1+x),\gamma f)\), and equivalently the pair \((\log(1+\gamma x),f)\), is divergence-subadditive. Let \(x^2f''=g(x)\). The condition \(x^2(\gamma f)''\in \mathcal{T}^{+}\) implies
  \begin{align*}
     &x^2 (\gamma g)''(x)-[\gamma g](\alpha)\,[\gamma g]\!\left(\frac{x}{\alpha}\right)\\
     &=\gamma\left(x^2 g''(x)-\gamma\, g(\alpha)\,g\!\left(\frac{x}{\alpha}\right)\right)\leq 0.
  \end{align*}
Thus, if \(x^2f''\in \mathcal{T}_{\gamma}\), then the pair \((\log(1+\gamma x),f)\) is divergence-subadditive. Divergence-subadditivity of the pair \((\log(1+\gamma x),f)\) implies
  \begin{align}
    \log(1+\gamma c)\leq \log(1+\gamma a)+\log(1+\gamma b),
\end{align}
which is equivalent to
\begin{align}
   c\leq a+b+\gamma ab.
\end{align}
\end{proof}
We are now ready to prove Theorem~\ref{th:concave:G}. Let
\begin{align*}
  &c\triangleq D_{f}(q_Y q_Z \,\|\, r_Y r_Z),~
  a\triangleq D_{f}(q_Y \,\|\, r_Y),~\\
  &b\triangleq D_{f}(q_Z \,\|\, r_Z),
\end{align*}
where the distributions \(q_Y\ll r_Y\) and \(q_Z\ll r_Z\) are defined on finite sets \(\mathcal{Y}\) and \(\mathcal{Z}\), respectively. According to the assumptions of Theorem~\ref{th:concave:G}, since \(x^2f''\in \mathcal{T}_{\gamma}\), Lemma~\ref{lem:POI9} implies
\[c\leq a+b+\gamma ab.\] Since \(G\) is an increasing function, we have
\begin{align*}
   G(c)\leq G(a+b+\gamma ab).
\end{align*}
We want to choose the parameter \(\gamma\) such that 
\[G(a+b+\gamma ab)\leq G(a)+G(b).\]
This is equivalent to
\begin{align}
a+b+\gamma ab\leq G^{-1}\!\bigl(G(a)+G(b)\bigr).\label{poi:1}
\end{align}
Lemma~\ref{lem:pok:1} with \(m=\mathsf{D_m}(f)\) shows that if
\[
0<\gamma\leq \frac{G^{-1}\!\bigl(2G(\mathsf{D_m}(f))\bigr)-2\mathsf{D_m}(f)}{\mathsf{D_m}^{2}(f)},
\]
then inequality \eqref{poi:1} holds. Thus, \(G(c)\leq G(a)+G(b)\). In other words,
\begin{align*}
  &\mathcal{D}_{G,f}(q_Y q_Z \,\|\, r_Y r_Z)\leq \mathcal{D}_{G,f}(q_Y \,\|\, r_Y)+\mathcal{D}_{G,f}(q_Z \,\|\, r_Z).
\end{align*}
Thus, the pair \((G,f)\) is divergence-subadditive.
Note that the concavity assumption on \(G\) only guarantees that \(\gamma\) is positive, since \(G(2x)\leq 2G(x)\) whenever \(G(x)\) is concave on \([0,\infty)\) and \(G(0)=0\). This completes the proof.
\subsection{Details of Example~\ref{ex:con:hh}}\label{ex:con:hh:p}

Let \(\epsilon\in(0,1)\). Consider the non-negative function
\[
\ell_{\epsilon}(x)=\sqrt{x}\left(1+\epsilon \cos(\log x)\right).
\]
Taking the second derivative of \(\ell_{\epsilon}(x)\) yields
\[
\ell_{\epsilon}''(x)=\frac{-1}{4x\sqrt{x}}\left(1+5\epsilon \cos(\log x)\right).
\]
Thus, by choosing \(\epsilon\in\left(\frac{1}{5},1\right)\), the function \(\ell_{\epsilon}(x)\) is neither concave nor convex. We want to show that
\begin{align}
    x^2\ell_{\epsilon}''(x)\leq \gamma\, \ell_{\epsilon}(\alpha)\,\ell_{\epsilon}\!\left(\frac{x}{\alpha}\right). \label{yt:1:1}
\end{align}
Since
\[
\ell_{\epsilon}(\alpha)\,\ell_{\epsilon}\!\left(\frac{x}{\alpha}\right) \geq \sqrt{x}\,(1-\epsilon)^2
\]
and
\[
x^2\ell_{\epsilon}''(x)\leq \frac{\sqrt{x}}{4}(5\epsilon-1),
\]
if we choose a positive \(\gamma\geq \frac{5\epsilon-1}{4(1-\epsilon)^2}\), then inequality \eqref{yt:1:1} holds. Thus, \(\ell_{\epsilon}(x)\in \mathcal{T}_{\gamma}\). Solving the equation \(x^2f_{\epsilon}''(x)=\ell_{\epsilon}(x)\) yields
\[
f_{\epsilon}(x)=-4\sqrt{x}-\frac{4\epsilon}{5}\sqrt{x}\cos(\log x)+\widetilde{a}x+\widetilde{b},
\]
where the parameters \(\widetilde{a}\) and \(\widetilde{b}\) are chosen so that \(f_{\epsilon}(1)=0\). We choose \(\widetilde{a}=\widetilde{b}=2+\frac{2\epsilon}{5}\)  and obtain
\[
f_{\epsilon}(x)=(2+0.4\epsilon)\left(\sqrt{x}-1\right)^2+0.8\epsilon \sqrt{x}\left(1-\cos(\log x)\right).
\]
Using Lemma~\ref{domain:lem}, we have \(\mathsf{D_m}(f_{\epsilon})=4+\frac{4\epsilon}{5}\). To guarantee that the pair \((x^p,f_{\epsilon}(x))\) is divergence-subadditive, we choose the parameter \(\gamma\) as follows:
\begin{align}
    \gamma&= \frac{G^{-1}\!\bigl(2G(\mathsf{D_m}(f))\bigr)-2\mathsf{D_m}(f)}{\mathsf{D_m}^{2}(f)}\nonumber\\
    &=\frac{5(2^{1/p}-2)}{20+4\epsilon}.
\end{align}
Set \(\widetilde{a}_p=5\bigl(2^{1/p}-2\bigr)\). On the other hand, we required \(\gamma\geq \frac{5\epsilon-1}{4(1-\epsilon)^2}\). Thus,  our goal is to choose \(\epsilon\in\left(\frac{1}{5},1\right)\) such that
\begin{align}
   \frac{\widetilde{a}_p}{20+4\epsilon}\geq \frac{5\epsilon-1}{4(1-\epsilon)^2}. \label{mji:1}
\end{align}
Let \(\Theta_p\) be the positive root of the equality case of inequality \eqref{mji:1}. We have
\[
\Theta_p=\frac{-12-\widetilde{a}_p+\sqrt{24\widetilde{a}_p+169}}{5-\widetilde{a}_p}.
\]
It is easy to see that if \(\epsilon\in\left(\frac{1}{5},\Theta_p\right]\), inequality \eqref{mji:1} holds. Thus, the pair \((x^p,f_{\epsilon}(x))\) is divergence-subadditive.

\subsection{Proof of Theorem~\ref{th:convex:G}}\label{th:convex:G:p}
We begin the proof with the following lemma.
\begin{lemma}\label{lem:POI99}
    If \(x^2f''(x)\in \mathcal{T}_{\gamma}^{-}\), then
    \begin{align}
   c\leq a+b-\gamma ab,
\end{align}
where
\begin{align*}
  &c\triangleq D_{f}(q_Y q_Z \,\|\, r_Y r_Z),~
  a\triangleq D_{f}(q_Y \,\|\, r_Y),~\\
  &b\triangleq D_{f}(q_Z \,\|\, r_Z), \label{gen-def-1:2:z}
\end{align*}
for any distributions \(q_Y\ll r_Y\) and \(q_Z\ll r_Z\) on finite sets \(\mathcal{Y}\) and \(\mathcal{Z}\), respectively.
\end{lemma}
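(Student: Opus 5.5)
The plan mirrors the proof of Lemma~\ref{lem:POI9}, with the transform $\log(1+x)$ replaced by $-\log(1-x)$. The key earlier input is the companion result in \cite{abin2026class:isit} for $G(x)=-\log(1-x)$. It states that if $x^2f''$ belongs to
\[
\mathcal{T}^{-}=\Bigl\{g:[0,\infty)\rightarrow[0,\infty)\;\Big|\;x^2g''(x)+g(\alpha)\,g\!\left(\tfrac{x}{\alpha}\right)\leq 0,~\forall x>0,~\forall\alpha>0\Bigr\},
\]
then the pair $(-\log(1-x),f)$ is divergence-subadditive. This is exactly $\mathcal{T}^{-}_{\gamma}$ with $\gamma=1$, as noted in the remark after Theorem~\ref{th:convex:G}. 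I will take that result as given and reduce the general $\gamma$ case to it by scaling $f$.

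First, fix $\gamma>0$ and set $g=x^2f''$. Observe that
\[
x^2(\gamma g)''(x)+[\gamma g](\alpha)\,[\gamma g]\!\left(\tfrac{x}{\alpha}\right)=\gamma\Bigl(x^2g''(x)+\gamma\, g(\alpha)\,g\!\left(\tfrac{x}{\alpha}\right)\Bigr)\leq 0 .
\]
Hence $g\in\mathcal{T}^{-}_{\gamma}$ gives $x^2(\gamma f)''=\gamma g\in\mathcal{T}^{-}$, so the pair $(-\log(1-x),\gamma f)$ is divergence-subadditive. Since $D_{\gamma f}=\gamma D_f$, we have $\mathcal{D}_{-\log(1-x),\gamma f}=\mathcal{D}_{-\log(1-\gamma x),f}$. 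So we get, with $a,b,c$ as in the statement,
\[
-\log(1-\gamma c)\leq -\log(1-\gamma a)-\log(1-\gamma b).
\]
Exponentiating gives $1-\gamma c\geq (1-\gamma a)(1-\gamma b)=1-\gamma a-\gamma b+\gamma^2ab$. Dividing by $\gamma>0$ yields $c\leq a+b-\gamma ab$.

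The main obstacle is the domain of $-\log(1-x)$. Applying the cited result to $\gamma f$ requires $\gamma D_f<1$, i.e. $\mathsf{D_m}(\gamma f)=\gamma\mathsf{D_m}(f)\le 1$, so that the divergences lie in the admissible range. I would check how the cited theorem handles this. I expect it to be implicit: membership in $\mathcal{T}^{-}$ forces the divergence range to fit, since $g$ concave and nonnegative with the strict product inequality bounds $\mathsf{D_m}$. If it is not implicit, I would handle the boundary case $\gamma a=1$ or $\gamma b=1$ by continuity or by reading both sides as $+\infty$. If the right-hand side is infinite the exponentiated inequality still holds trivially, since $(1-\gamma a)(1-\gamma b)=0\le 1-\gamma c$ needs only $\gamma c\le 1$. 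That last inequality follows from $\mathsf{D_m}(\gamma f)\le1$, which I would confirm is part of the cited hypothesis.
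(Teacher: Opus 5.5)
Your proposal is correct and follows the paper's proof step for step. You invoke Theorem~4 of \cite{abin2026class:isit} for $\mathcal{T}^{-}$, note that $g\in\mathcal{T}^{-}_{\gamma}$ implies $\gamma g\in\mathcal{T}^{-}$, use $\mathcal{D}_{-\log(1-x),\gamma f}=\mathcal{D}_{-\log(1-\gamma x),f}$, and exponentiate. The domain issue you flag, $\gamma\,\mathsf{D_m}(f)\le 1$, is not discussed in the paper, which leaves it implicit in the cited theorem, so your extra caution is a refinement rather than a departure.
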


\begin{proof}
    According to Theorem~4 in \cite{abin2026class:isit}, if the function \(x^2f''\) belongs to the set \(\mathcal{T}^{-}\) defined as
    \begin{align}
    \mathcal{T}^{-}=\Bigl\{g:[0,\infty)\rightarrow [0,\infty)\;\Big|\;&x^2g''(x)+\, g(\alpha)\,g\!\left(\frac{x}{\alpha}\right)\leq 0,\nonumber\\& \forall x>0,~\forall\alpha>0\Bigr\},
\end{align}
then the pair \((-\log(1-x),f)\) is divergence-subadditive. Let \(\gamma\) be a positive constant. Since
\[
\mathcal{D}_{-\log(1-\gamma x),f}(\cdot\|\cdot)=\mathcal{D}_{-\log(1-x),\gamma f}(\cdot\|\cdot),
\]
if \(x^2(\gamma f)''=\gamma\, x^2 f''\in \mathcal{T}^{-}\), then the pair \((-\log(1-x),\gamma f)\), and equivalently the pair \((-\log(1-\gamma x),f)\), is divergence-subadditive. Let \(x^2f''=g(x)\). The condition \(x^2(\gamma f)''\in \mathcal{T}^{-}\) implies
  \begin{align*}
     &x^2 (\gamma g)''(x)+[\gamma g](\alpha)\,[\gamma g]\!\left(\frac{x}{\alpha}\right)\\
     &=\gamma\left(x^2 g''(x)+\gamma\, g(\alpha)\,g\!\left(\frac{x}{\alpha}\right)\right)\leq 0.
  \end{align*}
Thus, if \(x^2f''\in \mathcal{T}_{\gamma}^{-}\), then the pair \((-\log(1-\gamma x),f)\) is divergence-subadditive. Divergence-subadditivity of the pair \((-\log(1-\gamma x),f)\) implies
  \begin{align}
    -\log(1-\gamma c)\leq -\log(1-\gamma a)-\log(1-\gamma b),
\end{align}
which is equivalent to
\begin{align}
   c\leq a+b-\gamma ab.
\end{align}
\end{proof}
We are now ready to prove Theorem~\ref{th:convex:G}. Let
\begin{align*}
  &c\triangleq D_{f}(q_Y q_Z \,\|\, r_Y r_Z),~
  a\triangleq D_{f}(q_Y \,\|\, r_Y),~\\
  &b\triangleq D_{f}(q_Z \,\|\, r_Z),
\end{align*}
where the distributions \(q_Y\ll r_Y\) and \(q_Z\ll r_Z\) are defined on finite sets \(\mathcal{Y}\) and \(\mathcal{Z}\), respectively. According to the assumptions of Theorem~\ref{th:convex:G}, since \(x^2f''\in \mathcal{T}_{\gamma}^{-}\), Lemma~\ref{lem:POI99} implies
\[
c\leq a+b-\gamma ab.
\]
Since \(G\) is an increasing function, we have
\begin{align*}
   G(c)\leq G(a+b-\gamma ab).
\end{align*}
We want to choose the parameter \(\gamma\) such that
\[
G(a+b-\gamma ab)\leq G(a)+G(b).
\]
This is equivalent to
\begin{align}
a+b-\gamma ab\leq G^{-1}\!\bigl(G(a)+G(b)\bigr).\label{poi:18}
\end{align}
Lemma~\ref{lem:pok:1} with \(m=\mathsf{D_m}(f)\) shows that if
\[
\gamma\geq \frac{2\mathsf{D_m}(f)-G^{-1}\!\bigl(2G(\mathsf{D_m}(f))\bigr)}{\mathsf{D_m}^{2}(f)},
\]
then inequality \eqref{poi:18} holds. Thus, \(G(c)\leq G(a)+G(b)\). In other words,
\begin{align*}
  &\mathcal{D}_{G,f}(q_Y q_Z \,\|\, r_Y r_Z)\leq \mathcal{D}_{G,f}(q_Y \,\|\, r_Y)+\mathcal{D}_{G,f}(q_Z \,\|\, r_Z).
\end{align*}
Thus, the pair \((G,f)\) is divergence-subadditive.
Note that the convexity assumption on \(G\) only guarantees that \(\gamma\) is greater than a positive constant, since \(G(2x)\geq 2G(x)\) whenever \(G(x)\) is convex on \([0,\infty)\) and \(G(0)=0\). This completes the proof.
\subsection{Details of Example~\ref{ex:convex:rt}}\label{ex:convex:rt:p}

Let \(\epsilon\in\left(0,\frac{1}{25}\right)\). Let \(m_\epsilon\) be a positive constant to be determined later. Consider the non-negative function
\[
\ell_{\epsilon}(x)=\frac{m_{\epsilon}}{4(1+\epsilon)}\sqrt{x}\left(1+5\epsilon \cos(\log x)\right).
\]
Taking the second derivative of \(\ell_{\epsilon}(x)\) yields
\[
\ell_{\epsilon}''(x)=\frac{-m_{\epsilon}}{16(1+\epsilon)\,x\sqrt{x}}\left(1+25\epsilon \cos(\log x)\right).
\]
The function \(\ell_{\epsilon}(x)\) is concave. We want to show that
\begin{align}
    -x^2\ell_{\epsilon}''(x)\geq \gamma\, \ell_{\epsilon}(\alpha)\,\ell_{\epsilon}\!\left(\frac{x}{\alpha}\right). \label{yt:1:1:u}
\end{align}
Since
\[
\ell_{\epsilon}(\alpha)\,\ell_{\epsilon}\!\left(\frac{x}{\alpha}\right) \leq \frac{m_{\epsilon}^2}{16(1+\epsilon)^2}\sqrt{x}\,(1+5\epsilon)^2
\]
and
\[
x^2\ell_{\epsilon}''(x)\geq \frac{m_{\epsilon}\sqrt{x}}{16(1+\epsilon)}(1-25\epsilon),
\]
if we choose a positive \(\gamma\) such that \(\gamma m_{\epsilon}\leq \frac{(1+\epsilon)(1-25\epsilon)}{(1+5\epsilon)^2}\), then inequality \eqref{yt:1:1:u} holds. Thus,  \(\ell_{\epsilon}(x)\in \mathcal{T}_{\gamma}^{-}\). Solving the equation \(x^2f_{\epsilon}''(x)=\ell_{\epsilon}(x)\) yields
\[
f_{\epsilon}(x)=\frac{-m_{\epsilon}\sqrt{x}}{1+\epsilon}\Bigl[1+\epsilon\cos(\log x)\Bigr]+\widetilde{a}x+\widetilde{b},
\]
where the parameters \(\widetilde{a}\) and \(\widetilde{b}\) are chosen so that \(f_{\epsilon}(1)=0\). We choose \((\widetilde{a},\widetilde{b})=(0,m_{\epsilon})\) and obtain
\[
f_{\epsilon}(x)=m_{\epsilon}\left(1-\frac{\sqrt{x}\,\bigl(1+\epsilon\cos(\log x)\bigr)}{1+\epsilon}\right).
\]
Using Lemma~\ref{domain:lem}, we have \(\mathsf{D_m}(f_{\epsilon})=m_{\epsilon}\).
Note that for \(G(x)=\frac{1}{2}\log\left(\frac{1+x}{1-x}\right)\), the quantity \(\dfrac{2x-G^{-1}(2G(x))}{x^2}\) is equal to \(\frac{2x}{x^2+1}\).
To guarantee that the pair \(\bigl(\frac{1}{2}\log\bigl(\frac{1+x}{1-x}\bigr),f_{\epsilon}(x)\bigr)\) is divergence-subadditive, we choose the parameter \(\gamma\) as follows:
\begin{align}
    \gamma&= \frac{2\mathsf{D_m}(f)-G^{-1}\!\bigl(2G(\mathsf{D_m}(f))\bigr)}{\mathsf{D_m}^{2}(f)}\nonumber\\
    &=\frac{2m_{\epsilon}}{1+m_{\epsilon}^{2}}.
\end{align}
On the other hand, we required \(\gamma m_{\epsilon}\leq \frac{(1+\epsilon)(1-25\epsilon)}{(1+5\epsilon)^2}\). Thus, our goal is to choose \(\epsilon\in\left(0,\frac{1}{25}\right)\) such that
\begin{align}
  \frac{2m_{\epsilon}^{2}}{1+m_{\epsilon}^{2}}\leq \frac{(1+\epsilon)(1-25\epsilon)}{(1+5\epsilon)^2}. \label{mji:22}
\end{align}
We choose \(m_\epsilon\) so that equality holds in \eqref{mji:22}. We have
\[
m_\epsilon=\sqrt{\frac{(1+\epsilon)(1-25\epsilon)}{75\epsilon^2+44\epsilon+1}},
\qquad
\gamma=\frac{2m_{\epsilon}}{m_{\epsilon}^2+1}.
\]
Thus, all the required inequalities hold. Hence, the pair \(\bigl(\frac{1}{2}\log\bigl(\frac{1+x}{1-x}\bigr),f_{\epsilon}(x)\bigr)\) is divergence-subadditive.
\subsection{Details of Example~\ref{neither:conv:1}}\label{neither:conv:1:p}

As shown in Appendix~\ref{ex:convex:rt:p}, the function \(x^2f_{\epsilon}''(x)\) defined in Example~\ref{ex:convex:rt} belongs to \(\mathcal{T}_{\gamma}^{-}\) provided that we can find a positive constant \(\gamma\) such that
\[
\gamma m_{\epsilon}\leq \frac{(1+\epsilon)(1-25\epsilon)}{(1+5\epsilon)^2}.
\]
Moreover, according to the proof of Theorem~\ref{th:convex:G}, we do not require \(G\) to be convex; if we find a positive \(\gamma\) such that
\[
\gamma\geq \frac{2\mathsf{D_m}(f_{\epsilon})-G^{-1}\!\bigl(2G(\mathsf{D_m}(f_{\epsilon}))\bigr)}{\mathsf{D_m}^{2}(f_{\epsilon})},
\]
then the pair \((G,f_{\epsilon})\) is divergence-subadditive, since \(x^2f_{\epsilon}''(x)\in \mathcal{T}_{\gamma}^{-}\). Thus, if we can find a positive \(\gamma\) such that the following two inequalities hold, then subadditivity follows:
\begin{align}
  \gamma m_{\epsilon}&\leq \frac{(1+\epsilon)(1-25\epsilon)}{(1+5\epsilon)^2},\label{uytg:8}\\
  \gamma&\geq \frac{2\mathsf{D_m}(f_{\epsilon})-G^{-1}\!\bigl(2G(\mathsf{D_m}(f_{\epsilon}))\bigr)}{\mathsf{D_m}^{2}(f_{\epsilon})}.\label{bhu:1}
\end{align}
For the function \(f_\epsilon\) in this example, \(\mathsf{D_m}(f_\epsilon)=m_{\epsilon}<1\). For the function \(G(x)=\frac{1}{4}\log\left(\frac{3(1+x)}{3-x}\right)\), the quantity
\[
\frac{2x-G^{-1}\!\bigl(2G(x)\bigr)}{x^2}
\]
is equal to \(\frac{2(x-1)}{x^2+3}\). Since \(m_{\epsilon}<1\), inequality \eqref{bhu:1} holds automatically for every positive parameter \(\gamma\). Thus, by choosing
\[
\gamma = \frac{(1+\epsilon)(1-25\epsilon)}{m_{\epsilon}(1+5\epsilon)^2},
\]
inequality \eqref{uytg:8} holds. Therefore, the pair
\[
\left(\frac{1}{4}\log\left(\frac{3(1+x)}{3-x}\right),f_{\epsilon}\right)
\]
is divergence-subadditive.

\end{document}